\newtheorem{theorem}{Theorem}
\newtheorem{lemma}{Lemma}
\newtheorem{definition}{Definition}
\newtheorem{fact}{Fact}
\newcommand{\R}{\mathbb{R}}
\providecommand{\norm}[1]{\left\lVert#1\right\rVert}
\title{Sparse Coresets for SVD on Infinite Streams}
\author{

Vladimir Braverman
\thanks{Department of Computer Science, Johns Hopkins University.
}
\and
Dan Feldman\thanks{Department of Computer Science, University of Haifa.}
\and
Harry Lang\thanks{MIT CSAIL.}
\and
Daniela Rus\thanks{MIT CSAIL.}
\and
Adiel Statman\thanks{University of Haifa.}
}
\begin{document}

\maketitle

\begin{abstract}
In streaming Singular Value Decomposition (SVD), $d$-dimensional rows of a possibly infinite matrix arrive sequentially as points in $\mathbb{R}^d$.
An $\epsilon$-coreset is a (much smaller) matrix whose sum of square distances of the rows to any hyperplane approximates that of the original matrix to a $1 \pm \epsilon$ factor.
Our main result is that we can maintain a $\epsilon$-coreset while storing only $O(d \log^2 d / \epsilon^2)$ rows.
Known lower bounds of $\Omega(d / \epsilon^2)$ rows show that this is nearly optimal.
Moreover, each row of our coreset is a weighted subset of the input rows.
This is highly desirable since it: (1) preserves sparsity; (2) is easily interpretable; (3) avoids precision errors; (4) applies to problems with constraints on the input.
Previous streaming results for SVD that return a subset of the input required storing $\Omega(d \log^3 n / \epsilon^2)$ rows where $n$ is the number of rows seen so far.
Our algorithm, with storage independent of $n$, is the first result that uses finite memory on infinite streams.
We support our findings with experiments on the Wikipedia dataset benchmarked against state-of-the-art algorithms.
\end{abstract}


\newpage

\section{Introduction}

Coresets are a small representation of a large dataset that preserve some key property of interest.
For example, given a large clustering problem over $n$ points, one could build an $\epsilon$-coreset of $O(\log n)$ points and then run an off-the-shelf algorithm on the coreset to obtain a $(1+\epsilon)$-approximate solution.
Coresets have been widely studied in recent years for a variety of problems~\cite{FL11, BFL19, maalouf2019tight, FSS13}.

Much of the large scale high-dimensional
data sets available today (e.g. image streams, text streams, etc.) are sparse.
For a concrete example, we can associate a matrix with Wikipedia where ($1.4 \times 10^6$) words
define the columns and ($4.4 \times 10^6$) documents define the rows.
Entry $(i,j)$ is the number of occurrences of word $j$ in document $i$.
Since most documents only contain $\sim 10^3$ words, this matrix is very sparse.
Recently, the first algorithm that could compute the eigenvectors of this matrix on commodity hardware was provided by~\cite{FVR16}.
We benchmark our algorithm against this result for the Wikipedia matrix and show dramatic improvements in both runtime and accuracy.

We consider Singular Value Decomposition (SVD) in the streaming model, which is applicable to processing large datasets or real-time data~\cite{FVR16, maalouf2019tight}.
In the streaming model, the rows of a (possibly infinite) matrix $A \in \mathbb{R}^{? \times d}$ arrive sequentially as $(a_1, a_2, a_3, \ldots)$.
Our main contribution is an algorithm to maintain an $\epsilon$-coreset for SVD while storing only $O(d \log d)$ rows.
Our construction has the desirable property that it is a weighted subset of the input.
The property that each row of the coreset is a multiple of a single row of the input is important because: (1) sparsity is preserved, (2) the coreset is easy to interpret, and (3) there are less floating point precision issues.

\begin{theorem}[Main Theorem]
It is possible to maintain an $\epsilon$-coreset for SVD in the streaming model on an $n \times d$ matrix using $O(d \log d / \epsilon^2)$ space and $O(d^2)$ worst-case update time\footnote{The space and update time are measured in terms of a single $d$-dimensional row (i.e. multiply by $d$ obtain the space/time in bits)}.
\end{theorem}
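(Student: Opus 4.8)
The first step is to remove the geometric wrapping and reduce the task to maintaining a weighted sub-multiset of the rows whose Gram matrix multiplicatively approximates $A^\top A$. The squared distance of a point $a\in\R^d$ to a hyperplane through the origin with unit normal $x$ is $(a^\top x)^2$, so a weighted set $(C,w)$ is an $\epsilon$-coreset for SVD exactly when
\[
(1-\epsilon)\,A^\top A \;\preceq\; \textstyle\sum_j w_j\, c_j c_j^\top \;\preceq\; (1+\epsilon)\,A^\top A .
\]
Since $A^\top A=\sum_i a_i a_i^\top$ is already a sum of rank-one matrices, this is a spectral-sparsification problem in which we may only keep reweighted copies of the $a_i a_i^\top$. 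The offline building block is classical: sample each row $a_i$ with probability proportional to its leverage score $\ell_i=a_i^\top(A^\top A)^{+}a_i$ (their sum is at most $d$) and reweight by $1/p_i$; a matrix-Chernoff bound then gives, with high probability, a weighted sub-multiset of $O(d\log d/\epsilon^2)$ rows that is a $(1\pm\epsilon)$-spectral approximation. Call this operation $\mathrm{Reduce}(\cdot,\epsilon)$. The point to exploit is that it only needs the $d\times d$ matrix $A^\top A$, never the rows: storing that matrix costs $O(d)$ ``rows'' of memory, it is updated under row insertions in $O(d^2)$ time, and a rank-one (Sherman--Morrison) update of its pseudoinverse makes every leverage-score evaluation cost $O(d^2)$.

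With this in hand the streaming algorithm maintains the exact Gram matrix $G_i=\sum_{j\le i}a_j a_j^\top$, its pseudoinverse, and a weighted sub-multiset $S$. When $a_i$ arrives we update $G_i$ and $G_i^{+}$ in $O(d^2)$ time, compute the online leverage score $\tilde\ell_i=a_i^\top G_i^{+}a_i\le 1$, and insert $a_i$ into $S$ with probability $p_i=\min\{1,\,c\,\tilde\ell_i\log d/\epsilon^2\}$ and weight $1/p_i$; as soon as $|S|$ passes $2c'd\log d/\epsilon^2$ we replace $S$ by $\mathrm{Reduce}(S,\epsilon/2)$ calibrated against the current $G_i$, spreading that $O(d^3\log d/\epsilon^2)$ computation over the next $\Omega(d\log d/\epsilon^2)$ updates so the worst-case per-step cost remains $O(d^2)$. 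The resource accounting is then immediate: $O(d)$ rows for $G_i$ and $G_i^{+}$ plus $O(d\log d/\epsilon^2)$ for $S$, and $O(d^2)$ worst-case update time, which multiplied by $d$ gives the bit bounds in the footnote. Each row of $S$ is a scalar multiple of an input row, so the subset property is preserved throughout.

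The technical heart --- and the step I expect to be the main obstacle --- is proving that $S$ never needs more than $O(d\log d/\epsilon^2)$ rows and that its accuracy does not decay over an infinite stream, i.e.\ genuinely eliminating the $\log n$ factor (indeed $\log^3 n$ in earlier subset constructions) inherent to plain merge-and-reduce. Two ingredients must be combined. First, a matrix-martingale (Freedman-type) argument showing that online leverage-score sampling against the \emph{moving} target $G_i$, even when interleaved with the occasional $\mathrm{Reduce}$, keeps $S$ within a $(1\pm\epsilon)$ spectral factor of the current $A^\top A$. Second --- and this is exactly what breaks the naive analysis, in which the error would be multiplied by $(1+\epsilon/2)$ at every $\mathrm{Reduce}$ and would therefore diverge as the number of $\mathrm{Reduce}$ calls grows --- one has to use that every $\mathrm{Reduce}$ is recalibrated against the \emph{exact} matrix $G_i$ rather than against the previous, already-lossy sample, and argue via a potential function that this recalibration resets the accumulated error each time; then the number of $\mathrm{Reduce}$ calls, which may grow with the stream's dynamic range, affects neither the stored size nor the final $1\pm\epsilon$ guarantee. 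Making that ``no-compounding'' claim precise, and checking that it survives the de-amortization of $\mathrm{Reduce}$, is the crux; the other ingredients --- the hyperplane-to-spectral reduction, the matrix-Chernoff bound for a single $\mathrm{Reduce}$, and the Sherman--Morrison bookkeeping --- are routine.
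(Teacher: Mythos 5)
Your route is genuinely different from the paper's, and it has a gap at exactly the point you flag as the crux; unfortunately that point is not merely hard but, as described, the scheme does not work. The problem is the ``recalibration resets the error'' claim. When you run $\mathrm{Reduce}(S,\epsilon/2)$, the only rank-one terms available are the (already reweighted) rows of $S$; whatever sampling probabilities you compute --- even exact leverage scores against the true Gram matrix $G_i$ --- the expectation of the recompressed Gram matrix is $\sum_j w_j c_j c_j^\top$, i.e.\ the Gram matrix of $S$, because the rows of $A$ outside $S$ have been discarded and cannot reappear. A matrix-Chernoff/Freedman bound therefore controls deviation from $\mathrm{Gram}(S)$, not from $G_i$, so each $\mathrm{Reduce}$ multiplies the existing $(1\pm\epsilon)$ error by a fresh $(1\pm\epsilon/2)$ rather than resetting it. Over an infinite stream the number of $\mathrm{Reduce}$ calls is unbounded (it grows with the dynamic range of the spectrum), so the accumulated error diverges; and if you instead never recompress, online leverage-score sampling is known to require $\Omega(d\log\kappa)$ rows, reintroducing a dependence on the stream. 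So the one step you defer is the step that fails, and no potential-function argument can rescue it without changing the data structure.

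The paper escapes this dilemma with a different mechanism, \emph{singleton sampling}: it keeps $O(m)$ independent samplers, each of which retains row $a_i$ iff a fixed uniform draw $u_y(a_i)$ stays below the threshold $s_n(a_i)/(s_n(a_i)+r_n)$, a threshold that is re-evaluated after every arrival (sensitivities only decrease, so rows are monotonically deleted, never resurrected with stale randomness). Conditioned on a sampler holding exactly one row, that row is distributed exactly as $s_n(a_i)/r_n$ over all $n$ rows seen so far; each sampler is a singleton with probability at least $1/4$ and has expected size $1$, so a Chernoff bound yields $m$ singletons and $O(m)$ total storage. The union of singletons is thus a genuine i.i.d.\ sample from the \emph{current} sensitivity distribution at every time step, and the offline sampling theorem is applied once, with no compounding. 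Your offline ingredients (sensitivities computable from $A^\top A$ alone, $O(d^2)$ update of the Gram matrix, matrix concentration for leverage-score sampling) match the paper's; the missing idea is a sampler whose membership decisions track the evolving distribution without ever committing irrevocably to a discarded row. Separately and more minorly, the paper's coreset guarantee is over affine $(d-1)$-subspaces, whereas your spectral reformulation only covers hyperplanes through the origin; the paper handles this by a standard reduction cited from prior work.
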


Previous coreset results required dependence on $n$, the number of rows seen so far.
This is the first streaming result for an SVD coreset that uses finite memory on an infinite stream.
Moreover, the coreset is a weighted subset of the input, implying that properties of the input (such as sparsity) are preserved.

Our solution is a mixture of a known \texttt{RAM} model coreset with a novel streaming approach.
The existing construction defines a distribution over all rows of the input, and draws an i.i.d. sample which becomes the coreset (after reweighting).  Thus, the coreset is naturally a subset of the input.
We use the coreset inductively to compute the sampling probabilities after each update (i.e. after each row is received), and maintain a set of samplers in parallel.
Each sampler runs a simple procedure we call \textit{singleton sampling}.
If a sampler contains exactly one element, then the probability distribution regarding which element it contains is identical to the desired sampling distribution for the coreset.
If it does not contain exactly one element, then we ignore that sampler as having failed (at that particular time; later it may contain exactly one element and be used).
Our coreset is then the union over all samplers containing one element, therefore keeping the known \texttt{RAM} construction in the streaming setting without incurring any of the overhead associated with previous techniques.

\section{Prior Work}

\begin{center}
\begin{tabular}{ |c|c|c|c|c|c| } 
 \hline
 Work & Subset of Input & Space In Rows \\ 
 \hline
 \cite{FVR16} & Yes & $O(d^2 \epsilon^{-2} \log^3 n)$ \\ 
 \cite{CE14} & No & $O(d^2 \epsilon^{-2} \log^3 n)$ \\
 \cite{FSS13} & ? & $O(d \epsilon^{-2})$ \\
 \cite{FL11} & Yes & $O(d^2 \epsilon^{-2} \log^3 n)$ \\
 \cite{CP15} & Yes & $O(d \log d \epsilon^{-2} \log^3 n)$ \\
 \textbf{**} & Yes & $O(d \log d / \epsilon^2)$ \\ 
 \hline
\end{tabular}
\end{center}

The result most similar to the present work is~\cite{CP15}, where rows are sampled according to their \textit{Lewis weight}.
As explained in that paper, the Lewis weight is equivalent to the statistical leverage score defined in other literature such as~\cite{Woodruff2012}.
While sampling $O(d \log d \epsilon^{-2})$ rows, they construct a coreset in the \texttt{RAM} model.
Like all \texttt{RAM} model construction with inverse-quadratic dependence on $\epsilon$, the merge-and-reduce technique can be applied to construct a coreset in the streaming model while incurring $O(\log^3 n)$ space overhead.

Also in the \texttt{RAM} model, a coreset of size $O(d^2 /\epsilon^2)$ is given in~\cite{FVR16}.  When plugged into the merge-and-reduce tree, the final space is $O(d^2 \epsilon^{-2} \log^3 n)$ space.
In the current work, we provide a coreset using only $O(d \log d \epsilon^{-2})$ space, so independent of the input size $n$, therefore valid for infinite streams.

A dimensionality reduction technique for low-rank approximation~\cite{CE14} can be used to construct a coreset for SVD of size $O(d^2 / \epsilon^2)$ in the RAM model, leading to a streaming space of $O(d^2 \epsilon^{-2} \log^3 n)$.  However, the resulting coreset is not a subset of the input.  This property is attained (with the same space bound) in~\cite{FL11}.

\section{Our Techniques}

To construct a coreset, we must sample from a distribution over the input.
This is the approach taken by a wide-class of constructions such all those inspired by~\cite{FL11}.
In the \texttt{RAM} model this is trivial since we can access the entire input without restriction.
Suppose we wish to sample one point from a stream ($a_1$, $a_2$, \ldots, $a_n$) where the probability of sampling $a_i$ is $\pi(a_i) / \sum_{j = 1}^n \pi(a_j)$ for some non-negative function $\pi$.
If the distribution is known beforehand, then reservoir sampling will obtain the desired sample.
The difficulty in the streaming problem is that the distribution is discovered as the stream arrives.

Reservoir sampling does not work when the probabilities change.
We generalize reservoir sampling in exchange for small probability of failure.
See Figure~\ref{fig:sampling} for an example of our technique on a simple distribution over two points $(a, b)$.
Here $\pi(a) = 1/3$ and $\pi(b) = 2/3$.
Suppose that after receiving point $a$, we take a random draw $u_a$ from the interval $[0, 1)$ and keep $a$ in memory if and only if $u_a < \pi(a)$.  We do the same for $b$.
At the end of the stream, we declare failure unless there is exactly one point stored in memory (we will argue the probability of failure is low).
Then, as seen in the top row, the distribution that we actually sample from has been distorted (see top-right box).
However, if we pre-transform the distribution (as shown in the bottom row), then this cancels the distortion and we recover the desired distribution (see bottom-right box).

While this example does not show the probabilities $\pi(a)$ or $\pi(b)$ changing, we later apply some additional properties gauranteed by the functions relevant to SVD that we can easily handle this generalization.

\begin{figure}
  \includegraphics[width=\linewidth]{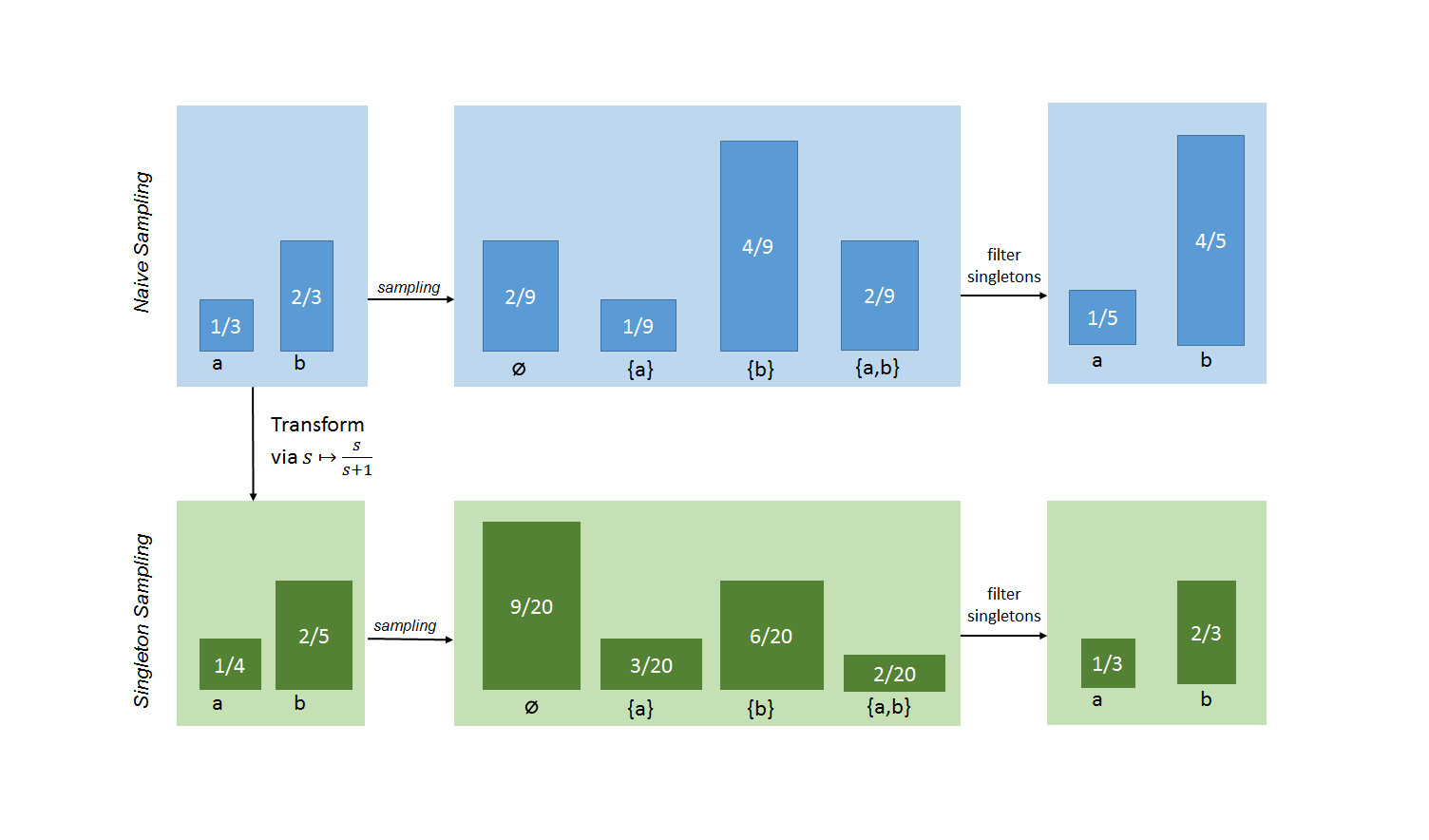}
  \caption{Singleton Sampling. Given a distribution over elements (column 1), we can independently sample element $n$ with probability $p_n$ to obtain a subset (column 2), which is then filtered to only obtain singletons (column 3).  However, the final distribution is distorted (row 1).  By pre-transforming the input distribution (row 2), we can recover the desired distribution through singleton sampling.}
  \label{fig:sampling}
\end{figure}

In Section~\ref{section:RAM} we restate a existing result to construct a coreset in the \texttt{RAM} model.
Section~\ref{section:computeSensitivity} will demonstrate how this construction can be accomplished while only looking at the $d$ rows of the $A^\top A$ (referencing the $n$ rows of $A$ in a very limited way).
Finally, in Section~\ref{section:streaming} we employ singleton sampling to maintain this coreset construction in the streaming setting.

\section{Preliminaries}

\textbf{Matrix Notation:}
For a matrix $M \in \mathbb{R}^{n \times d}$, let $M_{ij}$ denote the entry at row $i$ and column $j$.
$M_{\bullet j} \in \mathbb{R}^n$ denotes the $j^\text{th}$ column and $M_{j \bullet} \in \mathbb{R}^d$ denotes the $i^\text{th}$ row.  
To simplify notation throughout the proofs, we let $m_i = M_{i\bullet}^\top$ (and likewise: $a_i = A_{i\bullet}^\top$ and $u_i = U_{i\bullet}^\top$).
For a matrix $M \in \R^{n \times d}$ the term \textit{orthonormal} means that $MM^\top = I_n$ or $M^\top M = I_d$.
This generalizes the notion of an orthonormal square matrix, where both equalities are equivalent.
When $n \neq d$, at most one of the equalities can hold.

\begin{definition}[Squared-Distance] \label{def:distance}
For a vector $a' \in \R^d$ and a compact set $S \subset \R^d$, the squared-distance between $a'$ and $S$ is:
$$\mu(a',S) = \min_{s \in S} \|a'-s\|_2^2$$
where $\|\cdot\|_2$ is the $\ell_2$-norm.
This notation is overloaded for a matrix $A \in \R^{n \times d}$ as:
$$\mu(A,S) = \sum_{i=1}^n \mu(a_i, S)$$
\end{definition}

The above definition is the cost function used to define a coreset for SVD.  The input to the SVD problem is a matrix $A \in \R^{n \times d}$ which can be considered as a set of $n$ points in $\R^d$, namely $\{a_1, \ldots, a_n\}$.

\begin{definition}[Coreset for SVD]
Let $A \in \R^{n \times d}$ and $\epsilon \in [0,1]$.
A $\epsilon$-coreset for $A$ is matrix $B \in \R^{m \times d}$ such that for every affine $(d-1)$-subspace $S \subset \R^d$ we have:
$$ |\mu(A,S) - \mu(B,S)| \le \epsilon \mu(A,S)$$
The size of the coreset $B$ is $m$ (the number of rows in $B$).
When for each $j \in [m]$ we can write $b_j = w a_i$ for some value $w > 0$ and $i \in [n]$, we say that $B$ is a weighted subset of the input.
\end{definition}

If a coreset is a weighted subset of the input, then sparsity is preserved and the coreset admits a sparse representation whenever the input does.


\section{Coreset Construction}
\label{section:RAM}

In this section we show how to build a coreset for SVD, leaving computational concerns for the next section.
Let $\mathcal{S}_d$ be the set of all $d-1$ dimensional subspaces of $\R^d$.
\begin{definition}[Sensitivity] \label{def:sensitivity}
Let $A \in \R^{n \times d}$.
The sensitivity of a row $a_i$ with respect to $A$ is:
$$s_A(a_i) = \max_{S \in \mathcal{S}_d} \frac{\mu(a_i, S)}{\mu(A,S)}$$
\end{definition}

One can immediately observe that the sensitivity of any row lies in the interval $[0, 1]$.
We draw upon the result of~\cite{CP15}, who refer to the sensitivity by the name \textit{Lewis weight}.
The sensitivity has also been used by the term \textit{leverage score} in other publications.

\begin{theorem}[$\ell_2$ Matrix Concentration Bound from~\cite{CP15}] \label{thm:coreset}
Let $A \in \mathbb{R}^{n \times d}$.
Let $t = \sum_{i \in [n]} s_A(a_i)$.
Define a distribution over $A$ such that $a_i$ has weight $s_A(a_i) / t$.
Sample $m$ rows i.i.d. where:
$$ m \ge \frac{3t}{\epsilon^2} \left(\log_2 d + \ln \frac{1}{\delta} \right)$$
Construct a matrix $C \in \mathbb{R}^{m \times d}$ such each sampled row $a_i \in A$ corresponds to a row $\frac{t}{m s_A(a_i)} a_i \in C$.
With probability $1 - \delta$, $C$ is an $\epsilon$-coreset for $A$.
\end{theorem}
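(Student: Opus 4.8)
The plan is to reduce the coreset guarantee to a spectral approximation of a Gram matrix and then invoke a matrix Chernoff bound. First I would \emph{homogenize}: append a column of ones to $A$, forming $\tilde A\in\R^{n\times(d+1)}$ with rows $\tilde a_i=(a_i,1)$. Any affine $(d-1)$-subspace $S=\{x:\langle x,v\rangle=c\}$ corresponds to the vector $y=(v,-c)\in\R^{d+1}$, with $\mu(a_i,S)=(\tilde a_i^\top y)^2/\norm{v}_2^2$ and hence $\mu(A,S)=\norm{v}_2^{-2}\,y^\top M y$ where $M:=\tilde A^\top\tilde A=\sum_i\tilde a_i\tilde a_i^\top$. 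I model the reweighted sample by i.i.d.\ matrices $X_1,\dots,X_m$, where $X_k=\tfrac{t}{m\,s_A(a_i)}\,\tilde a_i\tilde a_i^\top$ with probability $s_A(a_i)/t$; writing $\sum_kX_k$ as a Gram matrix gives $\mu(C,S)=\norm{v}_2^{-2}\,y^\top\!\bigl(\sum_kX_k\bigr)y=\sum_k\tfrac{t}{m\,s_A(a_{i_k})}\mu(a_{i_k},S)$, so $C$ consists of the sampled input rows $a_{i_k}$ taken with weight $w_k=\tfrac{t}{m\,s_A(a_{i_k})}$ (so that $\mu(C,S)=\sum_kw_k\,\mu(a_{i_k},S)$), i.e.\ a weighted subset. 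Since $\norm{v}_2^{-2}$ is common to both sides, it suffices to prove the spectral bound $\bigl|y^\top(\sum_kX_k-M)y\bigr|\le\epsilon\,y^\top M y$ for every $y\in\R^{d+1}$.

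Next I would record the load-bearing identity that the sensitivity of $a_i$ equals the statistical leverage score of the $i$-th row of $\tilde A$: the normalizing denominators in Definition~\ref{def:sensitivity} cancel, so
\[
s_A(a_i)=\max_{y\neq 0}\frac{(\tilde a_i^\top y)^2}{y^\top M y}=\max_{z\neq 0}\frac{(\tilde a_i^\top M^{-1/2}z)^2}{\norm{z}_2^2}=\tilde a_i^\top M^{-1}\tilde a_i=\norm{M^{-1/2}\tilde a_i}_2^2 ,
\]
where $M^{-1}$ is read as the pseudoinverse and $y$ ranges over $\mathrm{range}(M)$ when $\tilde A$ is rank-deficient. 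In particular $t=\sum_i s_A(a_i)=\mathrm{tr}(M^{-1/2}MM^{-1/2})=\mathrm{rank}(\tilde A)\le d+1$. Whitening, set $Y_k:=M^{-1/2}X_kM^{-1/2}$; these are i.i.d.\ PSD matrices with $\mathbb{E}[X_k]=\tfrac1m M$, hence $\mathbb{E}\bigl[\sum_kY_k\bigr]=I$, and each satisfies $\norm{Y_k}_2=\tfrac{t}{m\,s_A(a_i)}\,\norm{M^{-1/2}\tilde a_i}_2^2=\tfrac tm$ by the identity above. Thus every summand is bounded by $R:=t/m$.

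Then I would invoke the matrix Chernoff inequality for a sum of i.i.d.\ PSD matrices in $\R^{(d+1)\times(d+1)}$ with $\lambda_{\max}(Y_k)\le R$ and $\mathbb{E}[\sum_kY_k]=I$: for $\epsilon\in(0,1]$,
\[
\Pr\Bigl[\lambda_{\max}\Bigl(\textstyle\sum_kY_k\Bigr)\ge 1+\epsilon\Bigr]\le(d+1)\,e^{-\epsilon^2/(3R)},\qquad \Pr\Bigl[\lambda_{\min}\Bigl(\textstyle\sum_kY_k\Bigr)\le 1-\epsilon\Bigr]\le(d+1)\,e^{-\epsilon^2/(3R)} .
\]
Setting $R=t/m$ and asking each tail to have probability at most $\delta/2$ requires $\epsilon^2 m/(3t)\ge\ln\bigl(2(d+1)/\delta\bigr)$, which is implied by the hypothesis $m\ge\frac{3t}{\epsilon^2}\bigl(\log_2 d+\ln\tfrac1\delta\bigr)$ once $\log_2 d$ is seen to dominate $\ln\bigl(2(d+1)\bigr)$ (for $d$ not too small; tightening the constant, the base of the logarithm, and the two-tail union bound is exactly the bookkeeping of~\cite{CP15}). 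On this event $\norm{\sum_kY_k-I}_2\le\epsilon$, equivalently $\bigl|y^\top(\sum_kX_k-M)y\bigr|\le\epsilon\,y^\top M y$ for all $y$; undoing the homogenization gives $|\mu(C,S)-\mu(A,S)|\le\epsilon\,\mu(A,S)$ simultaneously for every affine $(d-1)$-subspace $S$, and $C$ is a weighted subset by construction.

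I expect the main obstacle to be the \emph{exact} per-summand bound $\norm{Y_k}_2=t/m$: it is what keeps the Chernoff exponent proportional to $m/t$, hence the sample size proportional to $t$, and it rests on the sensitivity coinciding exactly (not merely up to a constant) with the lifted leverage score. The remaining delicate points — handling a rank-deficient $\tilde A$ via the pseudoinverse, and matching the stated constant $3$, the base-$2$ logarithm, and the union over the two spectral tails — are routine but fiddly; all of this is in essence the argument of~\cite{CP15}.
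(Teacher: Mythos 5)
The paper itself gives no proof of this theorem --- it is imported wholesale from~\cite{CP15} --- so the only question is whether your reconstruction is sound. The core machinery (leverage-score identity, whitening, matrix Chernoff with per-summand bound $t/m$) is indeed the standard argument and is correct \emph{for linear hyperplanes}. The genuine gap is your homogenization step. Definition~\ref{def:sensitivity} takes the maximum over $\mathcal{S}_d$, the \emph{linear} $(d-1)$-subspaces of $\R^d$, i.e.\ over $y=(v,0)$ only; your claimed identity
\[
s_A(a_i)=\max_{y\neq 0}\frac{(\tilde a_i^\top y)^2}{y^\top M y}=\tilde a_i^\top M^{+}\tilde a_i
\]
enlarges the maximum to all $y\in\R^{d+1}$ and is false in general. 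With the paper's definition one has $s_A(a_i)=a_i^\top(A^\top A)^{+}a_i$ (this is Lemma~\ref{lem:sensIsNorm}), which can be strictly smaller than the lifted leverage score $\tilde a_i^\top M^{+}\tilde a_i$: for $A=\begin{pmatrix}1\\2\end{pmatrix}$ one gets $s_A(a_1)=1/5$ while $\tilde a_1^\top M^{-1}\tilde a_1=1$. Consequently your load-bearing bound $\norm{Y_k}_2=t/m$ fails (in that example it is $5/m$), and the Chernoff exponent no longer scales as $m/t$. Worse, the homogenized statement is not merely unproved but false under these sampling probabilities: if $a_1=e_1$ and $a_2=\cdots=a_n=0$, then $s_A(a_i)=0$ for $i\ge 2$, the coreset consists only of copies of $a_1$, yet for the affine hyperplane $\{x:x_2=1\}$ we have $\mu(A,S)=n$ and $\mu(C,S)=1$.

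The repair is to drop the homogenization and prove the theorem for linear hyperplanes, exactly matching $\mathcal{S}_d$: then $\mu(a_i,S)=(a_i^\top v)^2$ for the unit normal $v$, the identity $s_A(a_i)=\norm{(A^\top A)^{+1/2}a_i}_2^2$ holds \emph{exactly}, $t=\mathrm{rank}(A)\le d$, and your whitening and matrix Chernoff steps go through verbatim with $\norm{Y_k}_2=t/m$. The extension to affine subspaces is then a separate reduction, which the paper explicitly defers to Section~4 of~\cite{maalouf2019tight} in the proof of Theorem~\ref{thm:streaming} rather than folding it into Theorem~\ref{thm:coreset}. Your remaining caveats (pseudoinverse for rank-deficient $A$, the constant $3$, base-$2$ versus natural logarithm for small $d$, the two-tail union bound) are, as you say, bookkeeping.
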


Using Theorem~\ref{thm:coreset}, the problem of constructing a coreset in the \texttt{RAM} model is reduced to computing $s_A(a_i)$ for each row $a_i \in A$.
The next section addresses this task.
Then in Section~\ref{section:streaming} we will export this construction to the streaming model.

\section{Computing Sensitivity}
\label{section:computeSensitivity}

We will show the construction in the \texttt{RAM} model, using only a restricted part of the input $A^\top A$, and then extend this to the streaming model in the next section.

\begin{theorem} \label{thm:ram}
Algorithm~\ref{alg:offline} takes $A^\top A \in \R^{d \times d}$ as input, terminates in $O(d^3)$ time, and outputs $Z = \mathcal{B}(A^\top A) \in \R^{d \times r}$ (where $r$ is the rank of $A$) such that $s(a_i) = \|A_{i \bullet} Z\|$ for each $i \in [n]$.
\end{theorem}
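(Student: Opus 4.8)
The plan is to convert the optimization in Definition~\ref{def:sensitivity} into an ordinary Rayleigh quotient and then read off a closed form from a whitening factor of $M := A^\top A$. The matrix $M \in \R^{d \times d}$ is symmetric positive semidefinite, and $\mathrm{col}(M) = \ker(M)^{\perp}$ equals the row space of $A$; in particular each row $a_i$ lies in $\mathrm{col}(M)$. Every $S \in \mathcal{S}_d$ is the orthogonal complement of some unit vector $v$, and for such $S$ we have $\mu(a_i, S) = \langle a_i, v\rangle^2$ while $\mu(A,S) = \norm{Av}^2 = v^\top M v$. Since the ratio is scale-invariant in $v$, the unit-norm constraint may be dropped, so $s_A(a_i) = \max_{v \neq 0} \langle a_i, v\rangle^2 / (v^\top M v)$. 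Writing $v = v_1 + v_2$ with $v_1 \in \mathrm{col}(M)$ and $v_2 \in \ker(M)$, the denominator equals $v_1^\top M v_1$ and the numerator equals $\langle a_i, v_1\rangle^2$ (because $a_i \perp \ker(M)$), so the maximum is attained with $v \in \mathrm{col}(M)$, where the restriction of $M$ is positive definite and no $0/0$ or $\infty$ can occur.

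For the algorithm $\mathcal{B}$: compute the compact eigendecomposition $M = V\Lambda V^\top$, where $\Lambda \in \R^{r \times r}$ holds the $r = \mathrm{rank}(A)$ positive eigenvalues and $V \in \R^{d \times r}$ the matching orthonormal eigenvectors, and output $Z := V\Lambda^{-1/2} \in \R^{d \times r}$. Then $\mathrm{col}(Z) = \mathrm{col}(M)$ and $Z^\top M Z = I_r$, i.e.\ $Z$ whitens $M$ on its column space. Eigendecomposing a dense symmetric $d \times d$ matrix and forming the product $V\Lambda^{-1/2}$ costs $O(d^3)$ time, giving the claimed bound; note that $Z$ depends on $A$ only through $A^\top A$, as required.

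Finally, since $Z$ has full column rank with $\mathrm{col}(Z) = \mathrm{col}(M)$, the map $w \mapsto Zw$ is a bijection from $\R^r \setminus \{0\}$ onto $\mathrm{col}(M) \setminus \{0\}$, and under it $v^\top M v = \norm{w}^2$ and $\langle a_i, v\rangle = \bigl\langle Z^\top a_i, w \bigr\rangle$. Hence, by Cauchy--Schwarz (with equality at $w \parallel Z^\top a_i$, and noting $Z^\top a_i \neq 0$ whenever $a_i \neq 0$ because $a_i \in \mathrm{col}(Z)$),
\[
 s_A(a_i) = \max_{w \neq 0}\frac{\bigl\langle Z^\top a_i, w\bigr\rangle^2}{\norm{w}^2} = \norm{Z^\top a_i}^2 = \norm{A_{i\bullet} Z}^2 = a_i^\top (A^\top A)^{+} a_i,
\]
which recovers the classical leverage-score formula and matches the statement once the square is folded into the normalization of $\mathcal{B}$'s output. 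A useful sanity check: $\sum_i \norm{A_{i\bullet} Z}^2 = \norm{AZ}_F^2 = \mathrm{tr}(Z^\top M Z) = r$, consistent with the known fact that leverage scores sum to the rank. The step needing the most care is the reduction to $\mathrm{col}(M)$: one must verify that the unconstrained supremum over $v \in \R^d$ is finite and equals the value over the column space, and that each input row genuinely lies in $\mathrm{col}(A^\top A)$ so the closed form cannot under-count; everything after that is routine linear algebra.
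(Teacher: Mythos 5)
Your proof is correct, but it follows a genuinely different route from the paper's. The paper does not work from Definition~\ref{def:sensitivity} directly: it invokes Lemma~\ref{lem:sensIsNorm} (cited from prior work), which says $s(a_i)=\norm{u_i}^2$ for any Thin SVD $A=UDV^\top$, and then shows that $(AZ,D,V)$ is itself a Thin SVD of $A$ — the crux being Lemma~\ref{lem:orthCols}, which proves $AZ$ has orthonormal columns via an eigenspace argument handling the non-uniqueness of $V$ (and of the right-inverse $Z$). You instead rewrite the sensitivity as a generalized Rayleigh quotient $\max_{v\neq 0}\langle a_i,v\rangle^2/(v^\top A^\top A v)$, justify restricting $v$ to $\mathrm{col}(A^\top A)$, change variables through the whitening factor $Z=V\Lambda^{-1/2}$, and finish with Cauchy--Schwarz to land on the pseudoinverse formula $a_i^\top(A^\top A)^{+}a_i$. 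Your argument is self-contained (it neither assumes the cited lemma nor needs the orthonormality of $AZ$), and it makes explicit the degenerate-subspace issue ($\mu(A,S)=0$) that the paper glosses over; the paper's argument, in exchange, yields the stronger structural fact that Algorithm~\ref{alg:offline} implicitly produces a full Thin SVD of $A$ from $A^\top A$ alone, and explicitly addresses why an arbitrary right-inverse on Line~\ref{line:rightInverse} gives the same answer (your $Z$ is one particular valid choice; the invariance of $AZ$ under the choice makes this harmless). You are also right that the theorem statement is missing a square: the correct identity, used everywhere else in the paper, is $s(a_i)=\norm{A_{i\bullet}Z}^2$, which is what both proofs actually establish.
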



Before presenting Algorithm~\ref{alg:offline}, we show how one could compute the sensitivity by having all of $A$ (and not only $A^\top A$) in memory.

\begin{definition}[Thin SVD]
Let $A \in \R^{n \times d}$ be a matrix of rank $r$.  A Thin Singular Value Decomposition of $A$ is a (not necessarily unique) decomposition $A = U D V^\top$ for matrices $U \in \R^{n \times r}$, $D \in \R^{r \times r}$, and $V \in \R^{d \times r}$ such that:
\begin{itemize}
    \item $U^\top U = V^\top V = I_r$
    \item $D \in \R^{r \times r}$ is a diagonal matrix of positive values in non-increasing order
\end{itemize}
\end{definition}

It is well-known that the matrix $D$ is unique, so we write $D_A$ to specify the middle matrix in a Thin SVD of $A$.
Algorithms such as in~\cite{thinSVDTextbook} exist to compute the Thin SVD in $O(d^2)$ time (the time is in terms of number of operations on $d$-dimensional vectors).   
\begin{lemma}[Corollary of Lemma 3.1 of~\cite{maalouf2019tight}] \label{lem:sensIsNorm}
Let $UDV^\top$ be a Thin SVD of a matrix $A$.  The sensitivity (see Definition~\ref{def:sensitivity}) of row $a_i$ is $\|u_i\|^2$.
\end{lemma}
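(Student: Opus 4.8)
The plan is to unpack the definition of sensitivity (Definition~\ref{def:sensitivity}) and show directly that the maximizing subspace $S$ yields the ratio $\|u_i\|^2$. Recall that every $(d-1)$-dimensional subspace $S \in \mathcal{S}_d$ is the orthogonal complement of a unit vector $x \in \R^d$, so that $\mu(a', S) = (a'^\top x)^2$ for a single point and $\mu(A, S) = \sum_{j=1}^n (a_j^\top x)^2 = \|A x\|^2 = x^\top A^\top A x$ for the whole matrix. Hence
\[
s_A(a_i) = \max_{\|x\| = 1} \frac{(a_i^\top x)^2}{x^\top A^\top A x}.
\]
Substituting a Thin SVD $A = UDV^\top$, the numerator becomes $(a_i^\top x)^2 = (u_i^\top D V^\top x)^2$ and the denominator becomes $x^\top V D^2 V^\top x = \|D V^\top x\|^2$ (using $U^\top U = I_r$). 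The natural change of variables is $y = D V^\top x \in \R^r$; since $D$ is invertible (all diagonal entries positive) and $V$ has orthonormal columns, as $x$ ranges over all of $\R^d$ the vector $y$ ranges over all of $\R^r$. This reduces the problem to $\max_{y \neq 0} (u_i^\top y)^2 / \|y\|^2$, which by Cauchy--Schwarz equals $\|u_i\|^2$, attained at $y \propto u_i$.

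The one genuine subtlety — and the step I expect to require the most care — is the handling of the row space restriction: when $r < d$, not every $x$ with $\|x\|=1$ gives a valid nonzero denominator (if $x \in \ker(A^\top A)$ then $\mu(A,S) = 0$ and the ratio is undefined or should be read as the supremum over admissible $S$). I would address this by noting that $a_i = V D^\top u_i$ lies in the column space of $V$, so the component of $x$ orthogonal to $\mathrm{col}(V)$ contributes nothing to either numerator or denominator; thus we may restrict attention to $x \in \mathrm{col}(V)$, i.e. $x = Vz$ for $z \in \R^r$, whence $y = Dz$ and $y$ ranges over all of $\R^r \setminus \{0\}$ exactly as claimed. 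One should also confirm that the maximizing direction $y \propto u_i$ pulls back to a genuine unit vector $x$ (after normalization), so the max is attained and not merely a supremum; this is immediate since $D$ and $V$ are injective on $\R^r$.

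Finally I would remark that this lemma is stated in~\cite{maalouf2019tight} as Lemma 3.1 in a slightly more general form (sensitivities for a family of subspace queries), and the version here is the special case of affine $(d-1)$-subspaces through the origin — or, after the standard lifting trick that appends a constant coordinate to each $a_i$ to reduce affine subspaces to linear ones, the general affine case. Since the excerpt permits us to cite earlier results, the cleanest writeup simply invokes that corollary; the self-contained argument above is included for completeness and to make the reduction to $\|u_i\|^2$ transparent.
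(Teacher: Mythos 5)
Your proof is correct. Note that the paper itself does not prove this lemma at all --- it is stated as a corollary of Lemma 3.1 of the cited work and used as a black box --- so your self-contained derivation is a genuine addition rather than a parallel of the paper's argument. Your route is the natural one: identify each $S \in \mathcal{S}_d$ with its unit normal $x$, write $s_A(a_i) = \max_{\|x\|=1} (a_i^\top x)^2 / \|Ax\|^2$, substitute the Thin SVD to reduce the ratio to $(u_i^\top y)^2/\|y\|^2$ with $y = DV^\top x$, and finish with Cauchy--Schwarz. You correctly flag and resolve the only real subtlety, namely that when $r < d$ the directions $x \perp \mathrm{col}(V)$ give $\mu(A,S)=0$ and must be excluded (the ratio is $0/0$ there, consistent with reading the definition as a supremum over admissible $S$), and that the surviving $x$ map surjectively onto $\R^r \setminus \{0\}$ under $x \mapsto DV^\top x$ so the maximum is attained. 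One trivial edge case you could mention for completeness is $u_i = 0$ (i.e.\ $a_i = 0$), where Cauchy--Schwarz degenerates but the claim $s_A(a_i) = 0 = \|u_i\|^2$ holds directly. Your closing remark about the affine case is tangential but harmless: Definition~\ref{def:sensitivity} is over linear $(d-1)$-subspaces, and the paper defers the affine reduction to the end of the proof of the main theorem.
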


The matrices $U$ and $V$ are not unique, but Lemma~\ref{lem:sensIsNorm} implies that $\|u_i\|$ is invariant under any possible choice of $U$.
Therefore one can compute the sensitivity of a point by computing any Thin SVD and then taking the norm of each row of $U$.
We now turn our attention to proving Theorem~\ref{thm:ram}, where $\mathcal{B}(A^\top A)$ is not unique but the value obtained for the sensitivity is unique.

\begin{fact} \label{fact:rank}
The rank of $A^\top A$ is equal to the rank of $A$.
\end{fact}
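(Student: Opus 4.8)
The plan is to show that $A$ and $A^\top A$ have exactly the same null space and then invoke the rank--nullity theorem. Both $A \colon \R^d \to \R^n$ and $A^\top A \colon \R^d \to \R^d$ are linear maps with domain $\R^d$, so once $\ker(A) = \ker(A^\top A)$ is established, we get $\operatorname{rank}(A) = d - \dim\ker(A) = d - \dim\ker(A^\top A) = \operatorname{rank}(A^\top A)$ immediately.

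First I would check the easy inclusion $\ker(A) \subseteq \ker(A^\top A)$: if $Ax = 0$ then $A^\top A x = A^\top(Ax) = 0$. The substantive direction is $\ker(A^\top A) \subseteq \ker(A)$: given $A^\top A x = 0$, left-multiply by $x^\top$ to obtain $x^\top A^\top A x = 0$, i.e.\ $\|Ax\|_2^2 = 0$, and hence $Ax = 0$. This step is where one uses that we are working over $\R$, so that the Euclidean inner product is positive definite; over a field admitting a nonzero vector of zero norm the identity would fail, but here it is immediate. Combining the two inclusions gives $\ker(A) = \ker(A^\top A)$, and the rank equality follows as above.

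As an alternative route (natural here since a Thin SVD is already in use), write $A = UDV^\top$ with $U^\top U = V^\top V = I_r$ and $D \in \R^{r\times r}$ diagonal with strictly positive entries. Then $A^\top A = V D (U^\top U) D V^\top = V D^2 V^\top$, and since $D^2 \in \R^{r \times r}$ is invertible and $V$ has full column rank $r$, the product $V D^2 V^\top$ has rank exactly $r = \operatorname{rank}(A)$; this also exhibits a Thin SVD of $A^\top A$, which will be convenient later.

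I do not expect a genuine obstacle here: the statement is a standard linear-algebra fact, and the only point requiring care is the passage $x^\top A^\top A x = 0 \Rightarrow Ax = 0$, which relies on being over the reals. I would present the null-space argument as the main proof for brevity and mention the SVD identity $A^\top A = V D^2 V^\top$ as a remark, since it is reused in the sequel.
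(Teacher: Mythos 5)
Your proposal is correct, and your primary argument is genuinely different from the paper's. The paper proves the fact in one line via the Thin SVD: writing $A = UDV^\top$, it observes that $A^\top A = V D_A^2 V^\top$ is itself a Thin SVD with all diagonal entries of $D_A^2$ nonzero, so both matrices have rank $r$. This is exactly your ``alternative route,'' and it is not merely a convenience here --- the identity $A^\top A = V D_A^2 V^\top$ is reused immediately afterward (Algorithm~\ref{alg:offline} recovers $D_A$ and $V$ from a Thin SVD of $A^\top A$, and the proof of Lemma~\ref{lem:orthCols} relies on $V'$ being a valid outcome of that decomposition), so the paper gets the rank fact essentially for free from machinery it needs anyway. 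Your main argument via $\ker(A) = \ker(A^\top A)$ and rank--nullity is more elementary and self-contained: it avoids invoking existence of the SVD, and you correctly flag the one delicate step ($x^\top A^\top A x = \|Ax\|_2^2 = 0 \Rightarrow Ax = 0$), which uses positive definiteness of the real inner product. Either proof is acceptable; in the context of this paper the SVD version is the more economical choice, but your kernel argument would stand on its own in a setting where the Thin SVD had not already been introduced.
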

\begin{proof}
Let $UDV^\top$ be a Thin SVD of $A$.
Then $A^\top A$ has a Thin SVD of $V D_A^2 V^\top$.
The result follows since all diagonal entries of $D_A^2$ are non-zero.
\end{proof}

We now walk through Algorithm~\ref{alg:offline} which computes the matrix $Z = \mathcal{B}(A^\top A)$.
Line~\ref{line:svd} computes a Thin SVD in  in $O(d^2)$ time.
$A^\top A$ is symmetric positive-definite, so it is guaranteed that $U = V$.
By uniqueness of singular values, we have that $\Lambda = D_{A^\top A} = D_A^2$.
On Line~\ref{line:root} we compute $D_A$ by taking the positive square-root of the entries of $D_A^2$, which are all real since $D_A^2$ has positive diagonal entries.
Let $r$ denote the rank of $A$.
$D_A V^\top \in \R^{r \times d}$ has rank $r$ and $r \le d$, so a right-inverse $Z \in \R^{d \times r}$ exists such that $D_A V^\top Z = I_r$ (see Line~\ref{line:rightInverse}).

We can therefore decompose $A = (AZ) D_A V^\top$.  If this is a Thin SVD, we can calculate the sensitivity of any row of $A$ by using $Z$.
First, we must verify that it is indeed a Thin SVD; it satisfies all properties but it remains to prove that the columns of $AZ$ are orthonormal.

\begin{lemma} \label{lem:orthCols}
For any matrix $Z$ output by Algorithm~\ref{alg:offline},
$AZ$ is orthonormal.
\end{lemma}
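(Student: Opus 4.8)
The plan is to verify the definition of orthonormality for the (generally non-square) matrix $AZ \in \R^{n \times r}$ directly, namely that $(AZ)^\top (AZ) = I_r$ (this is the only one of the two orthonormality equalities that can hold, since $n \neq r$ in general). The starting point is the identity $A^\top A = V D_A^2 V^\top$ established in the discussion preceding the lemma (and in the proof of Fact~\ref{fact:rank}), together with the defining property of $Z$: it is a right-inverse of $D_A V^\top$, i.e. $D_A V^\top Z = I_r$.

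First I would expand $(AZ)^\top(AZ) = Z^\top (A^\top A) Z = Z^\top V D_A^2 V^\top Z$. The key step is to extract the value of $V^\top Z$ from the right-inverse relation: since $D_A$ is a diagonal matrix with strictly positive entries (Line~\ref{line:root}), it is invertible, so left-multiplying $D_A V^\top Z = I_r$ by $D_A^{-1}$ gives $V^\top Z = D_A^{-1}$. Taking the transpose and using that $D_A^{-1}$ is diagonal hence symmetric yields $Z^\top V = D_A^{-1}$ as well. Substituting both into the expression above gives $Z^\top V D_A^2 V^\top Z = D_A^{-1} D_A^2 D_A^{-1} = I_r$, which is exactly what we need. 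The remaining Thin-SVD properties for the factorization $A = (AZ) D_A V^\top$ were already checked before the lemma, so establishing orthonormality of the columns of $AZ$ completes the verification that this is a Thin SVD.

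I do not anticipate a serious obstacle here; the only points requiring care are bookkeeping ones: (i) that $Z$ is not assumed to satisfy any relation beyond $D_A V^\top Z = I_r$ — in particular $Z$ is not unique and $Z^\top Z$ need not be special — so the argument must use only that relation and $A^\top A = V D_A^2 V^\top$; and (ii) that the invertibility of $D_A$ is what licenses passing from $D_A V^\top Z = I_r$ to $V^\top Z = D_A^{-1}$, which in turn relies on $A$ (equivalently $A^\top A$) having rank exactly $r$ so that $D_A \in \R^{r \times r}$ has no zero diagonal entry. Both are already in hand from the preceding text.
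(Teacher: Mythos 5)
Your proposal is correct, and it takes a genuinely different (and more elementary) route than the paper. You verify $(AZ)^\top(AZ)=I_r$ by pure algebra: from $A^\top A = V D_A^2 V^\top$ and the right-inverse relation $D_A V^\top Z = I_r$ you extract $V^\top Z = D_A^{-1}$ (legitimate, since $D_A$ is $r\times r$ diagonal with strictly positive entries), and then $(AZ)^\top(AZ) = (V^\top Z)^\top D_A^2 (V^\top Z) = D_A^{-1}D_A^2D_A^{-1}=I_r$. The paper instead fixes a reference Thin SVD $(U',D_A,V')$ of $A$, observes that the $V$ produced on Line~\ref{line:svd} can differ from $V'$ only by a block-diagonal orthonormal matrix $X$ acting within each eigenspace of $A^\top A$, uses that $X$ commutes with $D_A$, and concludes $U(V)=U(V')X^\top$ is orthonormal. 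Your argument is shorter, avoids all of the eigenspace-multiplicity bookkeeping, and handles both sources of non-uniqueness (the choice of $V$ and the choice of right-inverse $Z$) uniformly, since it uses nothing beyond the two identities above. What the paper's route buys is an explicit identification of $AZ$ with a rotation of the columns of a genuine left-singular-vector matrix $U'$, which makes the subsequent appeal to Lemma~\ref{lem:sensIsNorm} feel more transparent; but since you also note that $A=(AZ)D_AV^\top$ and the remaining Thin-SVD axioms were checked before the lemma, your direct computation fully suffices for the stated claim.
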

\begin{proof}
Algorithm~\ref{alg:offline} has exactly two sources of ambiguity: the choice of $V$ on Line~\ref{line:svd} and the choice of $Z$ on Line~\ref{line:rightInverse} (if $r < d$ the right-inverse is not unique).
The ambiguity of $Z$ is irrelevant since the value of $AZ$ is invariant under any choice.
It remains to prove the lemma under any choice of $V$ on Line~\ref{line:svd}.

Let $(U', D_A, V')$ be a Thin SVD of $A$.
Then $A^\top A = V' D_A^2 V'^\top$, and therefore the choice $V = V'$ is a possible outcome of Line~\ref{line:svd}.

For each diagonal entry $\lambda$ in $\Lambda$, the eigenspace $E_\lambda$ is unique.
Let us temporarily fix $\lambda$.
Moreover, if $\lambda$ occured in slots $j$ through $j'$ then $E_\lambda = \text{span}(V_{\bullet j}, \ldots, V_{\bullet j'})$.
Let $s = j' - j + 1 = \dim(E_\lambda)$.
Let $V_\lambda \in \R^{d \times s}$ denote the truncation of the $V$ to just columns $j$ through $j'$.
By uniqueness of the space $E_\lambda$, the matrix $V_\lambda$ is unique up to right-multiplication by an orthonormal matrix $X_\lambda \in \R^{s \times s}$.
In other words, $V'_\lambda = V_\lambda X_\lambda$.
Let $X$ be the block-diagonal $r \times r$ matrix of orthonormal transformations in each eigenspace from $V$ to $V'$, namely $V' = VX$.
Observe since each $X_\lambda$ multiplies with $\lambda I_s$ in $D$, we have the commuting relationship $XD = DX$.
Then we have $DV'^\top = D X^\top V^\top = X^\top DV^\top$.

Now consider the $U(V)$ and $U(V')$, the matrices $U$ we get from using $V$ or $V'$, respectively.
Observe the orthonormal relation $U(V) = U(V') X^\top$.
Since the columns of $U(V')$ and $X^\top$ are orthonormal, we conclude that $U(V)$ also has orthonormal columns.
\end{proof}

Lemma~\ref{lem:orthCols} ensures that $(AZ, D, V)$ is a Thin SVD of $A$ where the matrices $Z$, $D$, and $V$ are taken from any realization of Algorithm~\ref{alg:offline}.
We therefore conclude by Lemma~\ref{lem:sensIsNorm} that $s(a_i) = \|A_{i\bullet}Z\|$.
This completes the proof of Theorem~\ref{thm:ram}.

\begin{algorithm}
\caption{Input: matrix $\Psi \in \mathbb{R}^{d \times d}$}
\label{alg:offline}
\begin{algorithmic}[1]

\State $(U, \Lambda, V) \gets$ a Thin SVD of $\Psi$ \label{line:svd}

\State $D \gets \sqrt{\Lambda}$ \label{line:root}
 \Comment{$D \in \mathbb{R}^{r \times r}$ where $r = \texttt{rank}(\Psi) \le d$}

\State $Z \gets$ a right-inverse of $DV^\top$ \label{line:rightInverse} \Comment{$Z \in \mathbb{R}^{r \times d}$}

\State \Return $Z$

\end{algorithmic}
\end{algorithm}

In conclusion, we denote deterministic Algorithm~\ref{alg:offline} to take $A^\top A$ and output a matrix $Z$.
Although $Z$ may not be unique, it has the required invariant of $s(a_i) = \|A_{i \bullet} Z\|^2$ for each $i \in [n]$.

\section{Streaming Algorithm}
\label{section:streaming}

\begin{definition}[Streaming $\epsilon$-coreset]
A streaming $\epsilon$-coreset is an algorithm that receives a stream of elements.
After receiving each element, it returns an $\epsilon$-coreset for the prefix of elements received so far.
\end{definition}

Now the rows of $A$ will arrive in a stream.
Let $a_i$ denote the $i^\text{th}$ row of $A$, and
let $A_n$ denote the matrix $A$ after the first $n$ rows have arrived.
On Line~\ref{line:sensitivity}, $s_n(a_i)$ is the sensitivity of row $a_i$ with respect to $A_n$ (rows $a_1, \ldots, a_n$).


\begin{algorithm}
\caption{Input: $\epsilon \in (0,1)$, $\delta \in (0,1)$, stream of points in $\R^d$}
 \label{alg:main}
\begin{algorithmic}[1]

\State $m \gets \lceil 3 d \epsilon^{-2} (\log_2^2 d + \ln(2/\delta)) \rceil$
\label{line:defM}

\State $Y \gets \{1, \ldots, 8m \}$
\Comment{index over the samplers}

\For {each $y \in Y$}
    \State $M_{y} \gets \emptyset$ \Comment{each $M_y$ stores a sample of rows}
\EndFor \label{line:initEnd}

	\State $\Psi_0 \gets 0_{d \times d}$ \Comment{$\Psi_i = A_i^\top A_i$ for all $i$}

\For {each row $a_n \in \R^d$ from the stream}
\label{line:mainStart}

	\State $\Psi_n \gets \Psi_{n-1} + a_n a_n^\top$ \label{line:updateATA}
	
	\For {each $y \in Y$}
	    \State $u_{y}(a_n) \gets$ uniform random number in $[0,1)$

	    \State $M_{y} \gets M_{y} \cup \{a_n\}$
    \EndFor

    \State $Z_n \gets \mathcal{B}(\Psi_n)$ \Comment{Algorithm~\ref{alg:offline}}
    
    \State $r_n \gets $ number of columns of $Z_n$ \label{line:rank}
    
    \For{each $a_i \in \cup_{y} M_{y}$}

        \State $s_{n}(a_i) \gets \|Z_n ^ \top a_i\|^2$ \label{line:sensitivity} \Comment{compute sensitivity}
    \EndFor
    
    \For {each $y \in Y$}
	    \For {each $a_i \in M_{y}$}
	        \If {$u_y(a_i) >  \frac{s_n(a_i)}{s_n(a_i) + r_n}$} \label{line:delete}
	            \State Delete $a_i$ from $M_{y}$
	       \EndIf
	   \EndFor
	\EndFor \label{line:endDeletions}
	
	\State $\Gamma_n \gets \{y \in Y : |M_y| = 1\}$
	\label{line:setG} \Comment{index over singleton samples}

    \State $Q_n \gets \cup_{y \in \Gamma_n} M_{y}$ \label{line:buildSet}
    \Comment{the union of all samples containing exactly one row}
    
    \For{each $a_i \in Q_n$} 
	    \State $w_n(a_i) \gets \frac{r_n}{|Q_n|  s_n(a_i)}$ \label{line:weight}
	\EndFor
	
	\State \Return $(Q_n, w_n)$
	\Comment{the $\epsilon$-coreset for $A_n$}
\EndFor \label{line:mainEnd}
\end{algorithmic}
\end{algorithm}


\begin{figure}
  \includegraphics[width=\linewidth]{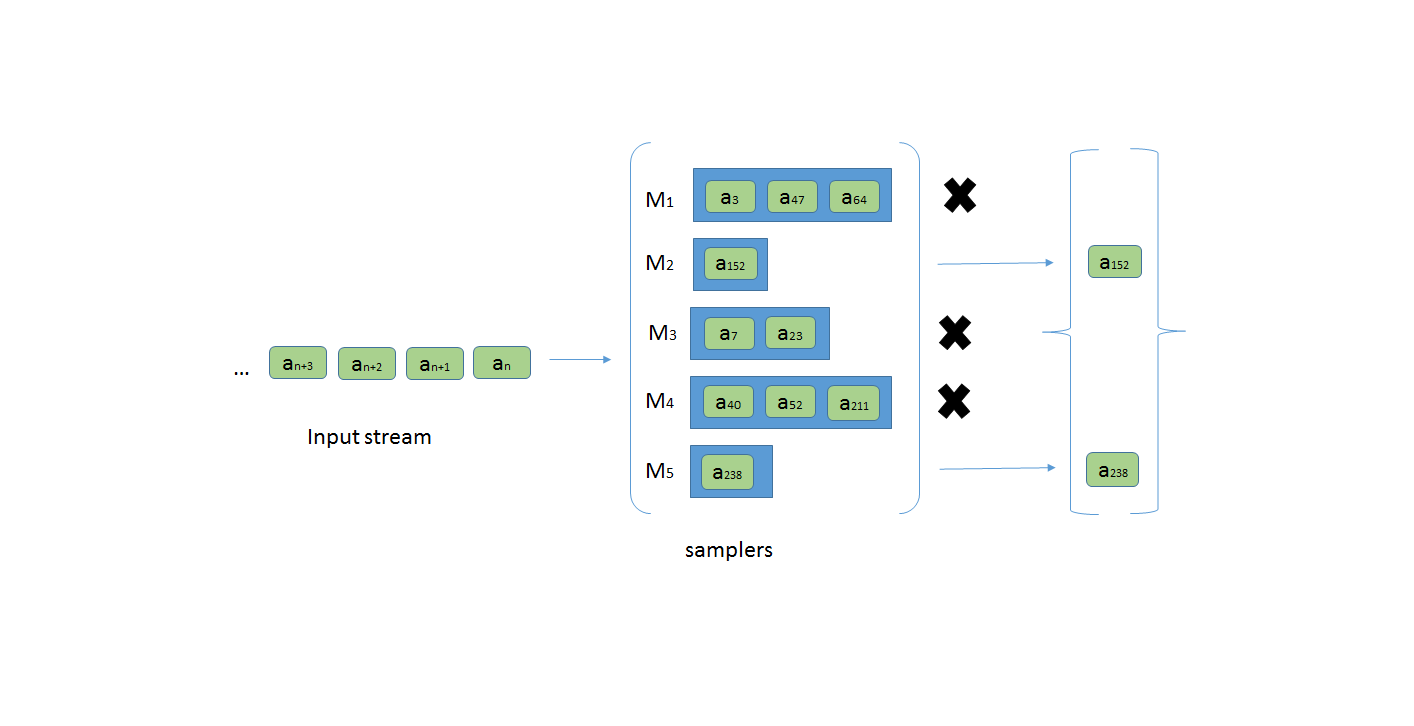}
  \caption{The flow for updating the samplers after each update in Algorithm~\ref{alg:main}.  The new point is first added to each sampler, and then the deletion condition is checked for every point in every sampler.}
  \label{fig:draw2}
\end{figure}

\begin{figure}
  \includegraphics[width=\linewidth]{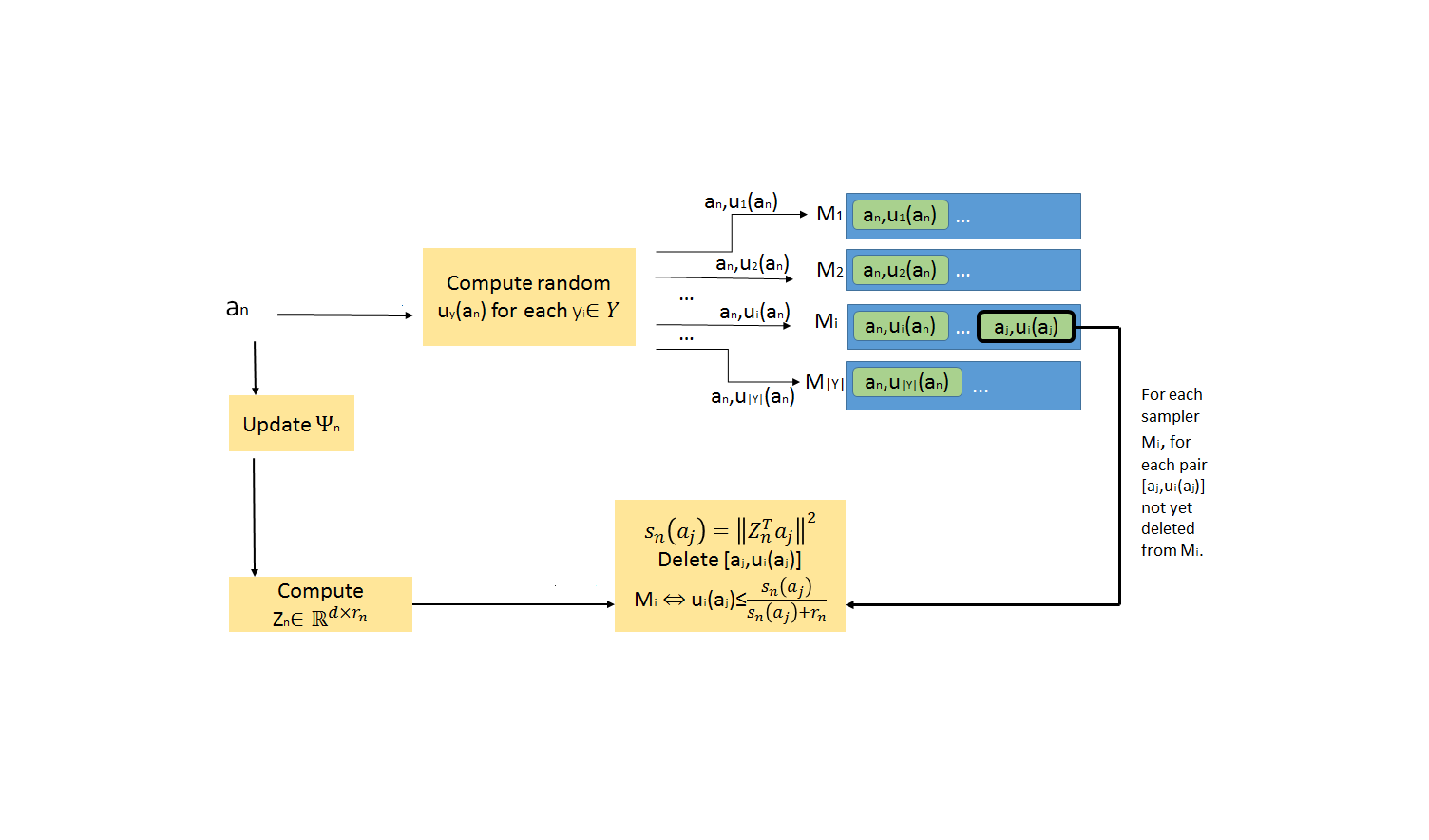}
  \caption{The input stream is read into an array of $8m$ samplers $\{M_y\}_{y \in Y}$.  At each step, the samplers containing exactly one element are combined to obtain a coreset.}
  \label{fig:draw3}
\end{figure}

\begin{theorem}[Main Theorem] \label{thm:streaming}
Let $\epsilon, \delta \in (0,1)$ and let $A$ be a stream of rows in $\R^d$.
After receiving each row of $A$, Algorithm~\ref{alg:main} returns an $\epsilon$-coreset for $A$ with probability $1 - \delta$.
The space is $O(\frac{d}{\epsilon^2} (\log^2 d + \log \frac{1}{\delta}))$ rows and the worst-case update time is $O(d^2)$.
\end{theorem}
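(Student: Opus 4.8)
The plan is to reduce the per-step guarantee to Theorem~\ref{thm:coreset} by showing that the singleton samplers reproduce \emph{exactly} the i.i.d.\ sensitivity-sampling it requires; the remainder is a Chernoff bound and bookkeeping.

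\emph{Step 1: identities and monotonicity.} I would first record that for a thin SVD $A_n=UDV^\top$ with $U\in\R^{n\times r_n}$, Lemma~\ref{lem:sensIsNorm} gives $s_n(a_i)=\norm{u_i}^2$, so the total sensitivity is $t_n:=\sum_{i\le n}s_n(a_i)=\norm{U}_F^2=\mathrm{tr}(U^\top U)=r_n$ --- i.e.\ it equals the rank computed on Line~\ref{line:rank}, and by Theorem~\ref{thm:ram} Line~\ref{line:sensitivity} indeed computes the true sensitivity with respect to $A_n$. Next I would note that $\mu(A_n,S)$ is nondecreasing in $n$ for every fixed $S$, hence $s_n(a_i)$ is nonincreasing in $n$; since $r_n$ is nondecreasing, the threshold $\theta_n(a_i):=s_n(a_i)/(s_n(a_i)+r_n)$ on Line~\ref{line:delete} is nonincreasing in $n$. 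Therefore deletions are permanent, and at time $n$ the sampler $M_y$ equals exactly $\{a_i:i\le n,\ u_y(a_i)\le\theta_n(a_i)\}$ --- as if each surviving point had faced the single test $u_y(a_i)\le\theta_n(a_i)$.

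\emph{Step 2: the coupling.} I would then analyze one sampler. As the $\{u_y(a_i)\}_i$ are independent, the events $\{a_i\in M_y\}$ are independent with probability $\theta_n(a_i)$, so
$$\Pr\!\big[M_y=\{a_i\}\big]=\theta_n(a_i)\prod_{j\ne i}\big(1-\theta_n(a_j)\big)=\frac{s_n(a_i)}{r_n}\prod_{j}\frac{r_n}{s_n(a_j)+r_n},$$
giving $\Pr[\,|M_y|=1\,]=\prod_j r_n/(s_n(a_j)+r_n)$ and, conditioned on $|M_y|=1$, the stored point is $a_i$ with probability $s_n(a_i)/t_n$: exactly the target distribution of Theorem~\ref{thm:coreset}. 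This cancellation is the pre-transformation of Figure~\ref{fig:sampling} --- using the threshold $s/(s+r)$ instead of $s/t$ is precisely what undoes the distortion caused by conditioning on the singleton event. Since distinct samplers use disjoint randomness, the pairs $\big(\mathbf{1}[\,|M_y|=1\,],\,\text{contents of }M_y\big)$ are i.i.d.\ over $y\in Y$, so conditioned on the singleton set $\Gamma_n$ the multiset $Q_n$ is $|\Gamma_n|$ i.i.d.\ draws from $a_i\mapsto s_n(a_i)/t_n$, and the weight $w_n(a_i)=r_n/(|Q_n|\,s_n(a_i))=t_n/(|Q_n|\,s_n(a_i))$ on Line~\ref{line:weight} is exactly the reweighting of Theorem~\ref{thm:coreset} with sample size $|Q_n|=|\Gamma_n|$.

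\emph{Step 3: enough singletons, then assemble.} Using $\ln(1+x)\le x$ and $t_n=r_n$,
$$-\ln\Pr[\,|M_y|=1\,]=\sum_j\ln\!\Big(1+\tfrac{s_n(a_j)}{r_n}\Big)\le\sum_j\tfrac{s_n(a_j)}{r_n}=\tfrac{t_n}{r_n}=1,$$
so each of the $8m$ independent samplers is a singleton with probability $\ge 1/e$; a Chernoff bound then gives $|\Gamma_n|\ge m$ except with probability $\le\delta/2$ (comfortably, since $m=\Omega(d\epsilon^{-2}(\log^2 d+\log\frac1\delta))$). On that event $|Q_n|=|\Gamma_n|\ge m\ge 3t_n\epsilon^{-2}(\log_2 d+\ln\frac2\delta)$ because $t_n=r_n\le d$ and $\log_2 d\le\log_2^2 d$, so conditioning on $\Gamma_n$ and applying Theorem~\ref{thm:coreset} with failure parameter $\delta/2$ makes $(Q_n,w_n)$ an $\epsilon$-coreset of $A_n$ with conditional probability $\ge 1-\delta/2$; a union bound yields the claimed $1-\delta$, and $(Q_n,w_n)$ is a weighted subset of the input by construction. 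For resources: $\Psi_n,Z_n$ use $O(d)$ rows, and $\mathbb{E}\big[\sum_y|M_y|\big]=8m\sum_i\theta_n(a_i)\le 8m\sum_i s_n(a_i)/r_n=8m$, so by Chernoff the samplers hold $O(m)$ rows except with probability $e^{-\Omega(m)}$, for total space $O(m)=O\!\big(\tfrac{d}{\epsilon^2}(\log^2 d+\log\tfrac1\delta)\big)$; each update is a rank-one update of $\Psi$, one call to Algorithm~\ref{alg:offline}, and $O(m)$ evaluations of $\norm{Z_n^\top a_i}^2$ plus $O(m)$ bookkeeping, dominated --- in the per-row unit of the footnote --- by the $O(d^2)$ cost of Algorithm~\ref{alg:offline} (Theorem~\ref{thm:ram}).

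\emph{Main obstacle.} I expect the coupling in Step 2 to need the most care: beyond the algebraic identity that the $s/(s+r)$ threshold regenerates the distribution $s/t$ under the singleton conditioning, one must argue that conditioning on the random pattern $\Gamma_n$ keeps the surviving samplers' contents genuinely i.i.d.\ with that marginal, so that Theorem~\ref{thm:coreset}, stated for a fixed sample size, applies with the data-dependent size $|\Gamma_n|$. The easy-to-miss prerequisite is the monotonicity of $\theta_n(a_i)$ (permanence of deletions), without which the single-test description of $M_y$ --- and hence the whole computation --- fails.
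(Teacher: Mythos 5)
Your proposal is correct and follows the same overall route as the paper: identify the total sensitivity with the rank $r_n$, show that each singleton sampler conditionally reproduces the target distribution $s_n(a_i)/r_n$, lower-bound the per-sampler singleton probability by a constant, apply a Chernoff bound to get $|\Gamma_n|\ge m$ and to bound the space, and invoke Theorem~\ref{thm:coreset}. Two local steps differ. First, you bound $\Pr[|M_y|=1]$ directly: using $t_n=r_n$ you get $\Pr[|M_y|=1]=\prod_j r_n/(s_n(a_j)+r_n)\ge e^{-1}$ via $\ln(1+x)\le x$, whereas the paper proves $\Pr[|M_y|=1]=\Pr[|M_y|=0]$ and combines this with Markov's inequality applied to $E[|M_y|]\le 1$ to get the weaker constant $1/4$ (Lemmas~\ref{lem:expected} and~\ref{lem:singleton}); both suffice for the Chernoff step, and yours is cleaner. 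Second, you explicitly verify that $s_n(a_i)$ is nonincreasing and $r_n$ nondecreasing, so the deletion threshold $s_n(a_i)/(s_n(a_i)+r_n)$ is nonincreasing, deletions are permanent, and $M_y^{(n)}$ is exactly the set of surviving points under a single test per point. The paper's per-sampler analysis silently assumes this characterization (it alludes to "additional properties guaranteed by the functions relevant to SVD" in Section 3 but never proves them), so this is a genuine and worthwhile addition rather than a redundancy. The remaining bookkeeping (weights, union bound, space, update time) matches the paper's.
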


We denote the rank of $A_n$ by $r_n$ (computed on Line~\ref{line:rank}).
The next lemma shows that we know the total sensitivity
exactly.

\begin{lemma} \label{lem:totalSens}
The total sensitivity of $A_n$ is $r_n$.
\end{lemma}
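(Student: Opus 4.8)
The plan is to reduce the claim to a one-line trace computation via the characterization of sensitivity as a squared row-norm of the left factor in a Thin SVD. Concretely, I would first invoke Lemma~\ref{lem:sensIsNorm}: fix any Thin SVD $A_n = UDV^\top$ with $U \in \R^{n \times r_n}$, $D \in \R^{r_n \times r_n}$, $V \in \R^{d \times r_n}$, and note that for each $i \in [n]$ the sensitivity of $a_i$ with respect to $A_n$ equals $\|u_i\|^2$, where $u_i = U_{i\bullet}^\top$. This holds regardless of which Thin SVD we pick, since Lemma~\ref{lem:sensIsNorm} guarantees $\|u_i\|$ is invariant under the (non-unique) choice of $U$.

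Next I would sum over all rows and recognize the result as a Frobenius norm:
\begin{equation}
\sum_{i=1}^n s_{A_n}(a_i) = \sum_{i=1}^n \|u_i\|^2 = \|U\|_F^2 = \mathrm{tr}(U^\top U).
\end{equation}
The final step uses the defining property of a Thin SVD that $U$ has orthonormal columns, i.e. $U^\top U = I_{r_n}$, so $\mathrm{tr}(U^\top U) = \mathrm{tr}(I_{r_n}) = r_n$. Since $r_n$ was defined as the rank of $A_n$ (and, by Fact~\ref{fact:rank}, equals the number of columns $r_n$ of $Z_n$ computed on Line~\ref{line:rank}), this establishes the lemma.

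There is essentially no hard step here: the content is entirely in Lemma~\ref{lem:sensIsNorm}, which is cited from prior work, plus the elementary fact that the squared Frobenius norm of a column-orthonormal matrix equals its number of columns. The only point worth a sentence of care is that the total sensitivity is well-defined despite the non-uniqueness of $U$ and $V$ in the Thin SVD — but this is already guaranteed because each individual summand $\|u_i\|^2$ is invariant (Lemma~\ref{lem:sensIsNorm}), and in any case the trace argument shows the sum equals $r_n$ for \emph{every} valid choice of $U$. I would also remark that this exactness is what lets Algorithm~\ref{alg:main} use $t = r_n$ directly in the sampling weights (Lines~\ref{line:weight} and the deletion threshold on Line~\ref{line:delete}) without any approximation to the total sensitivity.
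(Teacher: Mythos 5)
Your proof is correct and follows essentially the same route as the paper's: both invoke Lemma~\ref{lem:sensIsNorm} to write the total sensitivity as $\sum_i \|u_i\|^2$ and then use orthonormality of the columns of $U$ to conclude the sum is $r_n$ (the paper does this by swapping the order of summation, which is just the entrywise version of your $\|U\|_F^2 = \mathrm{tr}(U^\top U) = r_n$ argument).
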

\begin{proof}
We must prove $\sum_{i = 1}^n s_n(a_i) = r_n$.
For any Thin SVD $A_n = UDV^T$, we have that $s_n(a_i) = \|u_i\|^2$ by Lemma~\ref{lem:sensIsNorm}, so it suffices to compute $\sum_{i = 1}^n \|u_i\|^2$.
Since the columns of $U$ are orthonormal, we argue as follows:
\begin{align*}
\sum_{i = 1}^n \|u_i\|^2 &= \sum_{i = 1}^n \sum_{j = 1}^{r_n} U_{ij}^2 \\
&= \sum_{j = 1}^{r_n} \sum_{i = 1}^n U_{ij}^2 \\
&= \sum_{j = 1}^{r_n} \|u_j\|^2 \\
&= \sum_{j = 1}^{r_n} 1 \\
&= {r_n}
\end{align*}
\end{proof}

Note that $r_n$ (the rank of $A_n$) equals the dimension of the row-space of $A_n$ and therefore cannot decrease as the stream progresses.
Moreover, $r_n$ is trivially upper-bounded by $d$, as each row is embedded in $\mathbb{R}^d$.

Algorithm~\ref{alg:main} keeps a set of samples $\{M_y\}$, where each $M_y$ stores a sample of rows of $A$.
We refer to any sample $M_y$ that stores exactly one row as a \textit{singleton}.
As seen by Lines~\ref{line:setG}-\ref{line:buildSet}, any non-singletons are not used to build the coreset $(Q_n, w_n)$.
We first prove that every singleton is drawn from the distribution required by Theorem~\ref{thm:coreset}, namely that $a_i$ is sampled with probability $s_n(a_i) / r_n$.
In what follows we let $M_y^{(n)}$ denote the state of $M_y$ after row $a_n$ is processed.

\begin{lemma}
For any $i \in [n]$, $Pr(M_y^{(n)} = \{ a_i \} : |M_y^{(n)}| = 1) = s_n(a_i) / r_n$
\end{lemma}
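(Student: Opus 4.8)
The plan is to compute the joint probability that, after processing $a_n$, the sampler $M_y$ contains exactly the singleton $\{a_i\}$, and then normalize by the probability that $|M_y^{(n)}| = 1$. Fix $y$ and $i$. For $M_y^{(n)} = \{a_i\}$ to hold, we need: (1) $a_i$ survived every deletion check from step $i$ through step $n$, i.e. the random draw $u_y(a_i)$ satisfies $u_y(a_i) \le \frac{s_k(a_i)}{s_k(a_i) + r_k}$ for every $k$ with $i \le k \le n$; and (2) every other row $a_j$ ($j \in [n] \setminus \{i\}$) that was ever added to $M_y$ has since been deleted, i.e. there is some step $k$ with $j \le k \le n$ at which $u_y(a_j) > \frac{s_k(a_j)}{s_k(a_j)+r_k}$. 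Because the $u_y(a_j)$ are drawn independently across $j$ (and independently of everything else), the joint event factorizes over $j$.

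First I would handle the survival factor for $a_i$. Since $u_y(a_i)$ is uniform on $[0,1)$, the probability that $u_y(a_i) \le \frac{s_k(a_i)}{s_k(a_i)+r_k}$ for \emph{all} $k \in \{i,\dots,n\}$ is exactly $\min_{i \le k \le n} \frac{s_k(a_k)}{s_k(a_i)+r_k}$ — the $u$ is fixed once, so simultaneous survival over all steps is just survival against the tightest threshold. Here I expect to invoke the monotonicity properties the paper alludes to: $r_k$ is non-decreasing in $k$ (noted right after Lemma~\ref{lem:totalSens}), and presumably the sensitivities $s_k(a_i)$ are non-increasing in $k$ (more data can only decrease a point's relative contribution), so the function $x \mapsto \frac{x}{x+r}$ being increasing in $x$ and decreasing in $r$ makes $\frac{s_k(a_i)}{s_k(a_i)+r_k}$ non-increasing in $k$; hence the minimum is attained at $k = n$, giving survival probability $\frac{s_n(a_i)}{s_n(a_i)+r_n}$. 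Next, for each $j \ne i$, the probability that $a_j$ is \emph{not} present in $M_y^{(n)}$ is $1$ minus its survival probability, which by the same argument is $1 - \frac{s_n(a_j)}{s_n(a_j)+r_n} = \frac{r_n}{s_n(a_j)+r_n}$.

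Multiplying, $\Pr(M_y^{(n)} = \{a_i\}) = \frac{s_n(a_i)}{s_n(a_i)+r_n} \prod_{j \ne i} \frac{r_n}{s_n(a_j)+r_n}$. To get the conditional probability I divide by $\Pr(|M_y^{(n)}| = 1) = \sum_{\ell=1}^n \Pr(M_y^{(n)} = \{a_\ell\})$. Factoring out the common product $P := \prod_{j=1}^n \frac{r_n}{s_n(a_j)+r_n}$, each term $\Pr(M_y^{(n)} = \{a_\ell\})$ equals $P \cdot \frac{s_n(a_\ell)}{r_n}$, so the ratio telescopes to
\[
\frac{\frac{s_n(a_i)}{r_n}}{\sum_{\ell=1}^n \frac{s_n(a_\ell)}{r_n}} = \frac{s_n(a_i)}{\sum_{\ell=1}^n s_n(a_\ell)} = \frac{s_n(a_i)}{r_n},
\]
where the last equality is exactly Lemma~\ref{lem:totalSens}. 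This is the claimed identity.

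The main obstacle is justifying the monotonicity step cleanly — specifically that $\frac{s_k(a_i)}{s_k(a_i)+r_k}$ is non-increasing in $k$ so that the "survive every check" event collapses to a single threshold at time $n$. If the sensitivities are not literally monotone, the survival probability is $\min_k \frac{s_k(a_i)}{s_k(a_i)+r_k}$ rather than the clean $\frac{s_n(a_i)}{s_n(a_i)+r_n}$, and then the normalization no longer telescopes. So I would either prove $s_k(a_i) \ge s_{k+1}(a_i)$ directly (e.g. from $\mu(a_i, S)$ fixed while $\mu(A_k, S) \le \mu(A_{k+1}, S)$ for every subspace $S$, which gives exactly this), or — this is likely what the paper does — design the deletion threshold so that the relevant quantity is monotone by construction. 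Everything else is routine: independence of the $u_y(a_j)$'s, the elementary computation of a uniform's survival probability, and the telescoping normalization.
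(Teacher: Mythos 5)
Your proposal is correct and follows essentially the same route as the paper: compute $Pr(M_y^{(n)} = \{a_i\})$ as a product over independent survival events, normalize by $Pr(|M_y^{(n)}|=1)$, and invoke Lemma~\ref{lem:totalSens} to replace $\sum_\ell s_n(a_\ell)$ by $r_n$. In fact you are more careful than the paper on the one nontrivial step: the paper simply asserts $Pr(a_\ell \in M_y^{(n)}) = \frac{s_n(a_\ell)}{s_n(a_\ell)+r_n}$, whereas you correctly observe that this requires the deletion threshold $\frac{s_k(a_\ell)}{s_k(a_\ell)+r_k}$ to be non-increasing in $k$ (so that surviving all checks collapses to surviving the check at time $n$), and your sketched justification --- $\mu(a_\ell,S)$ is fixed while $\mu(A_k,S)$ is non-decreasing in $k$, hence $s_k(a_\ell)$ is non-increasing, while $r_k$ is non-decreasing --- is exactly the right argument and does go through.
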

\begin{proof}
Let $\gamma_\ell = Pr(a_i \in M_y^{(n)}) = \frac{s_n(a_\ell)}{s_n(a_\ell) + r_n}$ and define $\xi = \prod_{\ell=1}^n ( 1 - \gamma_\ell)$.
Observe that $Pr(M^{(n)}_{y} = \{a_i\}) = \xi \frac{\gamma_i}{1 - \gamma_i} = \xi s_n(a_i)/r_n$.
To condition upon the event that $M_y^{(n)}$ contains only a single element, we divide this probability by $Pr(|M_y^{(n)}| = 1) = \sum_{\ell = 1}^n Pr(M^{(n)}_{y} = \{p_\ell\})$ to obtain $s_n(a_i) / \sum_{\ell = 1}^n s_n(a_\ell)$.
The result then follows from Lemma~\ref{lem:totalSens}.
\end{proof}

Each sample $M_y$ should have a size concentrated around $1$ with high probability.  If the size is too far away from $1$, there will not be enough singletons to build a coreset.  Also, if the expected size is too large, the algorithm will require too much space.

\begin{lemma} \label{lem:expected}
The expected value of $|M_y^{(n)}|$ is $1$.
\end{lemma}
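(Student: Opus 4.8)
The plan is to compute $E[|M_y^{(n)}|]$ by linearity of expectation over the $n$ rows seen so far, and then to evaluate the resulting sum using the exact total-sensitivity count of Lemma~\ref{lem:totalSens}. So the first step is simply to write $E[|M_y^{(n)}|] = \sum_{\ell=1}^{n} \Pr(a_\ell \in M_y^{(n)})$, after which it suffices to determine, for each $\ell \le n$, the probability that row $a_\ell$ is still stored in sampler $y$ after step $n$.

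The second step is to identify $\Pr(a_\ell \in M_y^{(n)})$. When $a_\ell$ is inserted at step $\ell$ it is assigned a single key $u_y(a_\ell)$ drawn uniformly from $[0,1)$, and at every step $j \in \{\ell, \ell+1, \ldots, n\}$ it survives the deletion test on Line~\ref{line:delete} if and only if $u_y(a_\ell) \le s_j(a_\ell)/(s_j(a_\ell) + r_j)$. Hence $a_\ell$ is present after step $n$ exactly when $u_y(a_\ell)$ lies below the minimum of these $n-\ell+1$ thresholds. Here I would invoke two monotonicities specific to the SVD cost: $r_j$ is non-decreasing in $j$ (the row space of $A_j$ only grows, as noted after Lemma~\ref{lem:totalSens}), while $s_j(a_\ell)$ is non-increasing in $j$ (for each fixed subspace $S \in \mathcal{S}_d$, appending a row only enlarges $\mu(A_j,S)$, so the ratio $\mu(a_\ell,S)/\mu(A_j,S)$, and hence its maximum over $S$ which is the sensitivity, can only shrink). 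Writing $s_j(a_\ell)/(s_j(a_\ell)+r_j) = 1 - (1 + s_j(a_\ell)/r_j)^{-1}$ shows this threshold is non-increasing in $j$, so the minimum is attained at $j=n$ and $\Pr(a_\ell \in M_y^{(n)}) = \gamma_\ell$, the same per-row quantity $\gamma_\ell = s_n(a_\ell)/(s_n(a_\ell)+r_n)$ that appears in the preceding lemma.

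The final step combines the two: by linearity, $E[|M_y^{(n)}|] = \sum_{\ell=1}^{n} \gamma_\ell$, and since each $\gamma_\ell$ relates $s_n(a_\ell)$ to the total sensitivity $r_n$ in exactly the proportion that singleton sampling is built to realize, substituting the identity $\sum_{\ell=1}^{n} s_n(a_\ell) = r_n$ from Lemma~\ref{lem:totalSens} makes this sum collapse to $1$, which is the assertion.

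The step I expect to be the main obstacle is the second one: handling deletion thresholds that move over time. A reservoir- or coupon-style argument is immediate when the per-element probabilities are fixed, and the real content here is noticing that the two quantities driving the threshold move in opposite directions ($s_j(a_\ell)$ downward, $r_j$ upward), which forces the threshold to be monotone in $j$ and thereby collapses the long conjunction of per-step survival events into the single event at step $n$. A secondary care-point within that step is giving a clean proof that the leverage-type scores $s_j(a_\ell)$ are non-increasing; with both monotonicities in hand, the rest is just linearity of expectation together with the exact total-sensitivity identity.
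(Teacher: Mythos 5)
Your first two steps are sound, and your monotonicity argument (sensitivities non-increasing in $j$, rank non-decreasing, hence the deletion threshold $s_j(a_\ell)/(s_j(a_\ell)+r_j)$ non-increasing, so survival up to time $n$ is the single event $u_y(a_\ell) \le s_n(a_\ell)/(s_n(a_\ell)+r_n)$) is actually a more careful justification of $\Pr(a_\ell \in M_y^{(n)}) = \gamma_\ell$ than the paper gives, which simply asserts that this follows from Line~\ref{line:delete}.

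The gap is in your final step. The sum $\sum_{\ell=1}^n \gamma_\ell = \sum_{\ell=1}^n \frac{s_n(a_\ell)}{s_n(a_\ell)+r_n}$ does not collapse to $1$: each term has denominator $s_n(a_\ell)+r_n$, not $r_n$, so the identity $\sum_{\ell=1}^n s_n(a_\ell) = r_n$ from Lemma~\ref{lem:totalSens} only yields
\[
\sum_{\ell=1}^n \frac{s_n(a_\ell)}{s_n(a_\ell)+r_n} \;\le\; \sum_{\ell=1}^n \frac{s_n(a_\ell)}{r_n} \;=\; 1,
\]
with strict inequality whenever any sensitivity is positive. So the exact equality you claim (and which the lemma's wording loosely suggests) cannot be proved because it is not true; the expectation is $\sum_\ell \gamma_\ell < 1$ in general. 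The paper's own proof makes exactly this weaker move: it notes $\gamma_\ell \le s_n(a_\ell)/r_n$ and concludes only that $E[|M_y^{(n)}|]$ is at most $1$, which is all that the later arguments need (Markov's inequality in Lemma~\ref{lem:singleton} and the Chernoff bound in Lemma~\ref{lem:space}). If you replace your last step by this inequality, your argument is complete and, thanks to the explicit monotonicity justification, somewhat more rigorous than the paper's.
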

\begin{proof}
As in Lemma~\ref{lem:singleton}, define $\gamma_\ell = \frac{s_n(a_\ell)}{s_n(a_\ell) + r_n}$.
Observe that $s_n(a_\ell) \ge 0$ implies $\gamma_\ell \le s_n(a_\ell) / r_n$.
It follows from Line~\ref{line:delete} that $Pr(a_\ell \in M_y) = \gamma_\ell \le s_n(a_\ell) / r_n$.
The expected value of $|M_y|$ is therefore at most $\sum_{\ell =1}^n \frac{1}{r_n} s_n(a_\ell) = 1$.
\end{proof}

We now show that any $M_y$ will be a singleton with probability at least $\frac{1}{4}$.  This is essential for showing that, in aggregate, there will be at least $m$ singletons (Lemma~\ref{lem:atLeastM}) so that we can build a coreset.

\begin{lemma}
\label{lem:singleton}
For any $y \in Y$, $P(|M^{(n)}_y| = 1) \ge \frac{1}{4}$.
\end{lemma}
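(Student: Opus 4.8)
The plan is to work directly with the independent inclusion events. For each $\ell \in [n]$, let $\gamma_\ell = \Pr(a_\ell \in M_y^{(n)}) = \frac{s_n(a_\ell)}{s_n(a_\ell) + r_n}$ as in the previous lemmas; the inclusion events across distinct $\ell$ are independent since each uses its own fresh uniform $u_y(a_\ell)$ and the deletion test on Line~\ref{line:delete} compares $u_y(a_\ell)$ against a threshold that, once $n$ is fixed, is a deterministic function of $\Psi_n$ (hence shared but not random). Thus $|M_y^{(n)}| = \sum_{\ell=1}^n \mathbf{1}[a_\ell \in M_y^{(n)}]$ is a sum of independent Bernoulli$(\gamma_\ell)$ variables (a Poisson-binomial), and by Lemma~\ref{lem:expected} its mean is at most $1$; in fact, writing $X = |M_y^{(n)}|$, we have $\mathbb{E}[X] = \sum_\ell \gamma_\ell \le 1$.

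The core estimate I would prove is $\Pr(X = 1) \ge \xi \cdot \big(\sum_\ell \frac{\gamma_\ell}{1-\gamma_\ell}\big)$ where $\xi = \prod_\ell (1-\gamma_\ell)$, which is just the exact Poisson-binomial formula for the probability of exactly one success, and then lower-bound the two factors separately. For $\xi$: since $1 - \gamma_\ell \ge e^{-\gamma_\ell/(1-\gamma_\ell)} \ge \dots$ — actually cleaner is $\prod_\ell(1-\gamma_\ell) \ge 1 - \sum_\ell \gamma_\ell$ is the wrong direction, so instead use $\log(1-\gamma_\ell) \ge -\gamma_\ell/(1-\gamma_\ell)$, hence $\xi \ge \exp\!\big(-\sum_\ell \gamma_\ell/(1-\gamma_\ell)\big)$. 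Now note $\frac{\gamma_\ell}{1-\gamma_\ell} = \frac{s_n(a_\ell)}{r_n}$ exactly, so $\sum_\ell \frac{\gamma_\ell}{1-\gamma_\ell} = \frac{1}{r_n}\sum_\ell s_n(a_\ell) = 1$ by Lemma~\ref{lem:totalSens}. Therefore $\Pr(X = 1) \ge \xi \cdot 1 \ge e^{-1} > 1/4$, which is the claimed bound with room to spare.

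I would write this up as: (i) state the independence and the Poisson-binomial identity $\Pr(X=1) = \xi \sum_\ell \frac{\gamma_\ell}{1-\gamma_\ell}$; (ii) observe $\frac{\gamma_\ell}{1-\gamma_\ell} = s_n(a_\ell)/r_n$ so the sum equals $1$ by Lemma~\ref{lem:totalSens}; (iii) bound $\xi \ge e^{-\sum_\ell \gamma_\ell/(1-\gamma_\ell)} = e^{-1}$ via $\ln(1-x) \ge -x/(1-x)$ for $x \in [0,1)$; (iv) conclude $\Pr(X=1) \ge e^{-1} \ge 1/4$. The one place needing care is step (iii): $\gamma_\ell$ can in principle be close to $1$ if some $s_n(a_\ell)$ is large relative to $r_n$, so the crude bound $1-\gamma_\ell \ge e^{-c\gamma_\ell}$ (valid only for bounded $\gamma_\ell$) is not safe; the inequality $\ln(1-x) \ge -x/(1-x)$ holds on all of $[0,1)$ and sidesteps this, and it is exactly the term $\sum_\ell \frac{\gamma_\ell}{1-\gamma_\ell}$ that Lemma~\ref{lem:totalSens} pins to $1$. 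That is the main (and only real) obstacle; everything else is the standard exact-one-success computation for independent Bernoullis.
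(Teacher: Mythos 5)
Your proposal is correct, and its first half coincides exactly with the paper's: both start from the Poisson--binomial identity $\Pr(|M_y^{(n)}|=1)=\xi\sum_{\ell}\frac{\gamma_\ell}{1-\gamma_\ell}$ with $\xi=\prod_\ell(1-\gamma_\ell)$, and both use $\frac{\gamma_\ell}{1-\gamma_\ell}=s_n(a_\ell)/r_n$ together with Lemma~\ref{lem:totalSens} to reduce the sum to $1$, so that $\Pr(|M_y^{(n)}|=1)=\xi=\Pr(|M_y^{(n)}|=0)$. Where you diverge is in how you finish. The paper exploits the identity $\Pr(|M_y|=1)=\Pr(|M_y|=0)$: it applies Markov's inequality to $E[|M_y|]\le 1$ to get $\Pr(|M_y|\ge 2)\le\frac12$, and then the decomposition $1=2\Pr(|M_y|=1)+\Pr(|M_y|\ge 2)$ yields the stated $\frac14$. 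You instead lower-bound $\xi$ analytically via $\ln(1-x)\ge -x/(1-x)$ on $[0,1)$, whose sum telescopes to exactly the quantity Lemma~\ref{lem:totalSens} controls, giving $\xi\ge e^{-1}$. Your route buys a strictly better constant ($e^{-1}\approx 0.37$ versus $\frac14$) and makes Lemma~\ref{lem:expected} unnecessary for this step, at the cost of one calculus inequality; the paper's route is more elementary and reuses the expectation bound it needs elsewhere anyway. One point both you and the paper gloss over: the retention test is applied at every time step, so $a_\ell\in M_y^{(n)}$ really requires $u_y(a_\ell)$ to clear the threshold at all times $\ell,\ldots,n$; identifying $\Pr(a_\ell\in M_y^{(n)})$ with the time-$n$ threshold $\gamma_\ell$ implicitly uses that the threshold is non-increasing in $n$ (sensitivities can only shrink and $r_n$ can only grow). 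This is true and easy, but worth a sentence if you want the argument to be airtight.
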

\begin{proof}
Markov's inequality to Lemma~\ref{lem:expected} yields $Pr(|M_y| \ge 2) \le \frac{1}{2}$.

\begin{align*}
P(|M_y|=1) &= \sum_{\ell = 1}^n P(M_y = \{a_\ell\}) \\
&= \sum_{\ell = 1}^n \gamma_\ell \prod_{z \neq \ell} (1 -  \gamma_z) \\
&= \sum_{\ell = 1}^n \frac{\gamma_\ell}{1 - \gamma_\ell} \prod_{z=1}^n (1 - \gamma_z) \\
&= \frac{1}{r_n} \left( \sum_{\ell = 1}^n  s_n(a_\ell) \right) \left(\prod_{z=1}^n (1 - \gamma_z) \right) \\
&= Pr(|M_y| = 0)
\end{align*}

Decomposing $1 = Pr(|M_y| = 0) + Pr(|M_y|=1) + Pr(|M_y| \ge 2)$, we substitute to obtain $1 \le 2 Pr(|M_y| = 1) + \frac{1}{2}$.  The result follows.
\end{proof}

Theorem~\ref{thm:coreset} guarantees that a coreset can be constructed provided that we have at least $m$ singletons.
As seen on Line~\ref{line:setG}, $|\Gamma_n|$ is the number of singletons after processing row $a_n$.

\begin{lemma}
\label{lem:atLeastM}
$|\Gamma_n| \ge m$ with probability at least $1 -\delta / 2$.
\end{lemma}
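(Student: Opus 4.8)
The plan is to write $|\Gamma_n|$ as a sum of independent indicator variables and apply a Chernoff bound. First I would pin down the independence structure. The only randomness used by Algorithm~\ref{alg:main} lies in the uniform draws $u_y(a_i)$, and these are sampled independently across all samplers $y \in Y$ and all incoming rows $a_i$. The stream $a_1, \ldots, a_n$ is fixed, and the quantities $s_n(a_i)$ and $r_n$ that enter the deletion test on Line~\ref{line:delete} are deterministic functions of the stream. Hence the events $E_y := \{\, |M_y^{(n)}| = 1 \,\}$, for $y \in Y$, depend on disjoint collections of random variables and are mutually independent.

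Next I would invoke Lemma~\ref{lem:singleton}, which gives $Pr(E_y) \ge 1/4$ for every $y$. Since $|Y| = 8m$, the count $|\Gamma_n| = \sum_{y \in Y} \mathbf{1}[E_y]$ is a sum of $8m$ independent indicators, each with success probability at least $1/4$, so $|\Gamma_n|$ stochastically dominates a $\mathrm{Binomial}(8m, 1/4)$ random variable $X$ with $\mathbb{E}[X] = 2m$ (couple each Bernoulli$(p_y)$ with a Bernoulli$(1/4)$ lying below it). This reduces the claim to bounding $Pr(X < m)$. Because $m$ is half the mean of $X$, a standard multiplicative Chernoff bound gives $Pr(X < m) \le e^{-m/4}$ (equivalently, Hoeffding's inequality applied to the $8m$ bounded terms yields $\exp(-2m^2/(8m)) = e^{-m/4}$).

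Finally I would verify the numerics: by the definition of $m$ on Line~\ref{line:defM}, together with $d \ge 2$ and $\epsilon \le 1$ (the case $d = 1$ being degenerate for SVD), $m \ge 3 d \epsilon^{-2}(\log_2^2 d + \ln(2/\delta)) \ge 6 \ln(2/\delta)$, so $e^{-m/4} \le e^{-\ln(2/\delta)} = \delta/2$ and therefore $Pr(|\Gamma_n| < m) \le \delta/2$, as claimed. The step I expect to require the most care is the independence argument: one must note that although every sampler processes the same points and the same computed sensitivities, those sensitivities are non-random given the (adversarial but fixed) stream, so each $E_y$ is measurable with respect to only that sampler's own uniform draws; once this is settled, the remainder is a routine Chernoff estimate plus checking that the constant hidden in the definition of $m$ is large enough.
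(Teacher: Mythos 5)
Your proof is correct and follows essentially the same route as the paper's: both write $|\Gamma_n|$ as a sum of $|Y| = 8m$ independent indicator variables, each succeeding with probability at least $1/4$ by Lemma~\ref{lem:singleton}, apply a Chernoff bound to obtain $Pr(|\Gamma_n| < m) \le e^{-m/4}$, and then check that the definition of $m$ on Line~\ref{line:defM} makes this at most $\delta/2$. Your additional justification of the independence structure and the explicit coupling to a $\mathrm{Binomial}(8m,1/4)$ variable are welcome elaborations of steps the paper asserts without comment, not a different argument.
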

\begin{proof}
We see from Line~\ref{line:setG} that $|\Gamma_n|$ is a sum of $|Y|$ independent Bernoulli trials, each which succeeds with probability at least $\frac{1}{4}$ by Lemma~\ref{lem:singleton}.
By a Chernoff bound, $Pr(|\Gamma_n| \le \frac{1}{2} |Y| \frac{1}{4}) \le e^{-|Y| / 32}$.
We note that $|Y| / 32 = m / 4 \ge \ln(2 / \delta)$.  Plugging this into the Chernoff bound yields the result.
\end{proof}

In terms of space consumption, Algorithm~\ref{alg:main} stores $\Psi_n$ ($d$ rows), $Z_n$ ($r_n \le d$ rows), and the $\{M^{(n)}_y\}_{y \in Y}$.  These samples, of random size, are the dominating factor.
The next lemma bounds their aggregate size with high probability.

\begin{lemma} \label{lem:space}
Algorithm~\ref{alg:main} stores $O(\frac{d}{\epsilon^2} (\log^2 d + \log \frac{1}{\delta}))$ rows with probability $1 - \delta / 2$.
\end{lemma}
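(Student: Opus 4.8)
The plan is to bound the total number of rows stored in $\bigcup_{y \in Y} M_y^{(n)}$ via a concentration argument, since the other stored objects ($\Psi_n$ and $Z_n$) contribute only $O(d)$ rows and are dominated. Write $N = \sum_{y \in Y} |M_y^{(n)}|$. By Lemma~\ref{lem:expected}, each $|M_y^{(n)}|$ has expectation exactly $1$, and the samplers are independent, so $\mathbb{E}[N] = |Y| = 8m$. Recalling from Line~\ref{line:defM} that $m = \Theta\!\left(\frac{d}{\epsilon^2}(\log^2 d + \log\frac{1}{\delta})\right)$, it suffices to show $N = O(m)$ with probability $1 - \delta/2$; this immediately gives the claimed $O\!\left(\frac{d}{\epsilon^2}(\log^2 d + \log\frac{1}{\delta})\right)$ bound on the stored rows.

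The main obstacle is that the individual summand $|M_y^{(n)}|$ is unbounded — in principle a single sampler could retain all $n$ rows — so a naive Chernoff/Hoeffding bound for bounded variables does not apply directly. To handle this, I would control the tail of each $|M_y^{(n)}|$. From the deletion rule on Line~\ref{line:delete}, row $a_\ell$ survives in $M_y^{(n)}$ independently with probability $\gamma_\ell = \frac{s_n(a_\ell)}{s_n(a_\ell) + r_n} \le \frac{s_n(a_\ell)}{r_n}$, and $\sum_\ell \gamma_\ell \le 1$ by Lemma~\ref{lem:totalSens}. Thus $|M_y^{(n)}|$ is a sum of independent Bernoulli variables with total mean at most $1$; I can compute its moment generating function $\mathbb{E}\!\left[e^{\lambda |M_y^{(n)}|}\right] = \prod_\ell \bigl(1 + \gamma_\ell(e^\lambda - 1)\bigr) \le \exp\!\bigl((e^\lambda - 1)\sum_\ell \gamma_\ell\bigr) \le \exp(e^\lambda - 1)$ for any $\lambda > 0$. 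Since the $|Y| = 8m$ samplers are independent, $\mathbb{E}[e^{\lambda N}] \le \exp\!\bigl(8m(e^\lambda - 1)\bigr)$, which is exactly the MGF bound for a Poisson-type sum. A standard Chernoff argument then gives $Pr(N \ge 2 \cdot 8m) \le \exp\!\bigl(8m(e^\lambda - 1) - 16m\lambda\bigr)$; optimizing (or just taking $\lambda = \ln 2$) yields $Pr(N \ge 16m) \le e^{-\Omega(m)}$.

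Finally, since $m \ge \ln(2/\delta)$ (indeed $m/4 \ge \ln(2/\delta)$ as already noted in the proof of Lemma~\ref{lem:atLeastM}), the exponential tail $e^{-\Omega(m)}$ is at most $\delta/2$. Hence with probability $1 - \delta/2$ we have $N \le 16m = O\!\left(\frac{d}{\epsilon^2}(\log^2 d + \log\frac{1}{\delta})\right)$, and adding the $O(d)$ rows for $\Psi_n$ and $Z_n$ does not change the asymptotics. I should double-check the precise constant in the exponent to confirm it comfortably beats $\ln(2/\delta)$ given the definition of $m$; if the constant from $\lambda = \ln 2$ is too small, choosing a larger deviation threshold (say $N \ge 4 \cdot 8m$) gives a better exponent at no cost to the asymptotic statement.
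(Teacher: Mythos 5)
Your proposal is correct and follows essentially the same route as the paper: the paper also writes the total stored size as a sum of $|Y|n$ independent Bernoulli indicators (one for each event $a_i \in M_y^{(n)}$) with total mean $8m$ and applies a standard multiplicative Chernoff bound at deviation factor $2$ to get $X_n < 16m$ except with probability $e^{-\Omega(m)} \le \delta/2$. Your extra care in deriving the MGF per sampler is just an unrolled version of the same Chernoff argument, and your observation that $\Psi_n$ and $Z_n$ contribute only $O(d)$ additional rows matches the paper's accounting.
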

\begin{proof}
Lemma~\ref{lem:expected} implies that $\sum_{y \in Y} |M_y^{(n)}| = |Y| = 8m$.
Let the random variable $X_n$ denote the space (in rows) used by all $\{M^{(n)}_y\}_{y \in Y}$ after row $a_n$ arrives.
$X_n = \sum_{y \in Y} |M^{(n)}_{y}|$ is a sum of $|Y|n$ independent Bernoulli trials, each event being $a_i \in M^{(n)}_y$ for some $1 \le i \le n$ and $y \in Y$.
We have the Chernoff bound that $Pr(X_n \ge (1+\eta) \mu) \le e^{-\eta^2 \mu / (2 + \eta)}$ for any $\eta \ge 1$ where $\mu = E[X_n]$.
Using $\eta = 1$, this yields that $X_n < 16m$ with probability at least $1 - e^{-m/3} < \delta / 2$.
Substitute the value for $m$ from Line~\ref{line:defM} to finish the proof.
\end{proof}

\begin{proof}[Proof of Theorem~\ref{thm:streaming}]
For correctness, we will apply Theorem~\ref{thm:coreset} to the output $(Q_n, w_n)$.
The total sensitivity $t = r_n \le d$ by Lemma~\ref{lem:totalSens}.
The weighting is correct (Line~\ref{line:weight}).
It remains to show that $|Q_n| 
\ge \frac{cd}{\epsilon^2} \left( \log_2^2 d + \ln \frac{1}{\delta} \right)$.
Since $|Q_n| = |\Gamma_n|$, the result follows with probability $1 - \delta / 2$ by combining Lemma~\ref{lem:atLeastM} and Line~\ref{line:defM}.

As for update time, the dominating factor is computing the Thin SVD on Line~\ref{line:svd} which can be done in $O(d^2)$ time.  Then Lemma~\ref{lem:space} completes the proof of the theorem for subspaces.
The reduction from affine subspaces to proper subspaces can be found in Section 4 of~\cite{maalouf2019tight}.
\end{proof}

\section{Experiments}

\newcommand{\sca}{0.33}
\newcommand{\scal}{0.33}
\begin{figure}[!ht]
	\begin{subfigure}[h]{\scal\textwidth}
		\centering
		\includegraphics[scale=\scal]{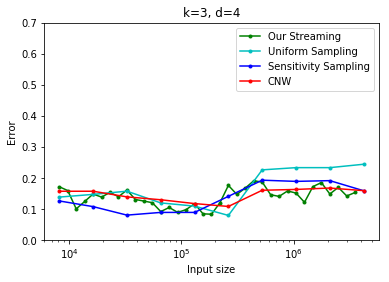}
		\caption{\label{wt1} $k=3, d=4$, error}
	\end{subfigure}
		\begin{subfigure}[h]{\scal\textwidth}
		\centering
		\includegraphics[scale=\scal]{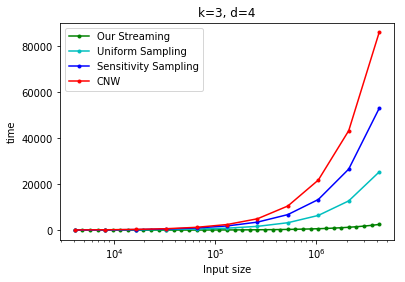}
		\caption{\label{wt2} $k=3, d=4$, time}
	\end{subfigure}
	\begin{subfigure}[h]{\scal\textwidth}
		\centering
		\includegraphics[scale=\scal]{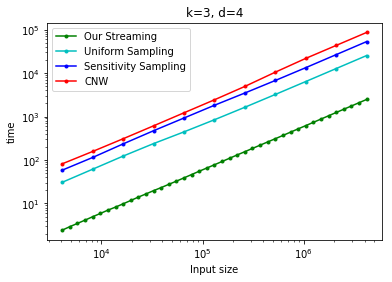}
		\caption{\label{wt3} $k=3, d=4$, time log}
	\end{subfigure}
	\begin{subfigure}[h]{\scal\textwidth}
		\centering
		\includegraphics[scale=\scal]{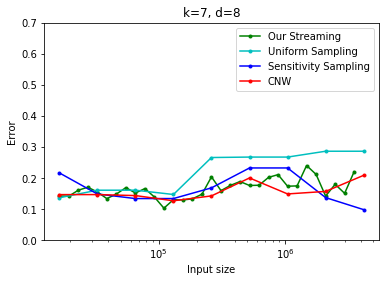}
		\caption{\label{wt4} $k=7, d=8$, error}
	\end{subfigure}
		\begin{subfigure}[h]{\scal\textwidth}
		\centering
		\includegraphics[scale=\scal]{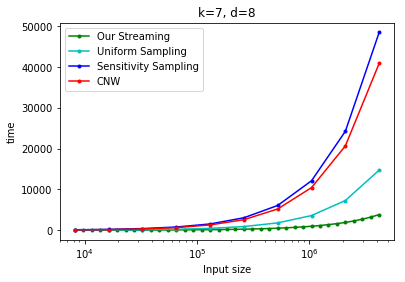}
		\caption{\label{wt5} $k=7, d=8$, time}
	\end{subfigure}
	\begin{subfigure}[h]{\scal\textwidth}
		\centering
		\includegraphics[scale=\scal]{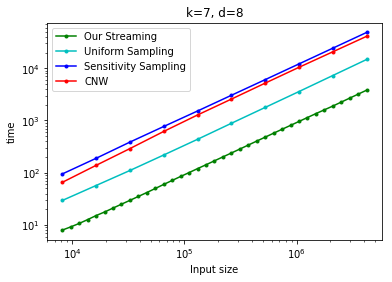}
		\caption{\label{wt6} $k=7, d=8$, time log}
	\end{subfigure}
		\begin{subfigure}[h]{\scal\textwidth}
		\centering
		\includegraphics[scale=\scal]{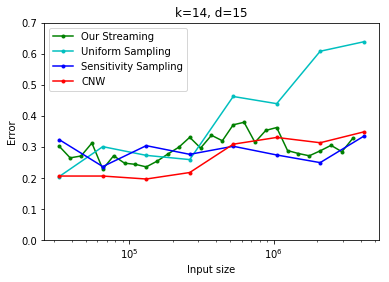}
		\caption{\label{wt7} $k=14, d=15$, error}
	\end{subfigure}
		\begin{subfigure}[h]{\scal\textwidth}
		\centering
		\includegraphics[scale=\scal]{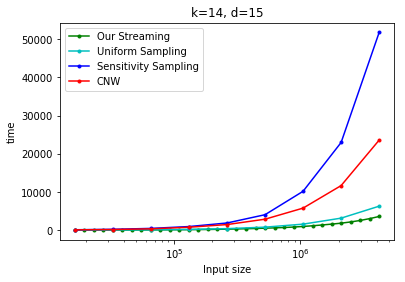}
		\caption{\label{wt8} $k=14, d=15$, time}
	\end{subfigure}
	\begin{subfigure}[h]{\scal\textwidth}
		\centering
		\includegraphics[scale=\scal]{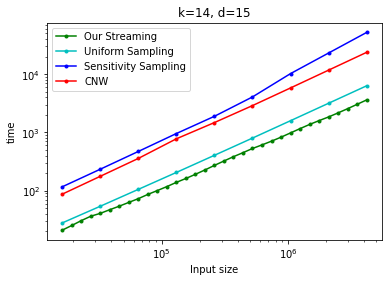}
		\caption{\label{wt9} $k=14, d=15$, time log}
	\end{subfigure}
	\begin{subfigure}[h]{\scal\textwidth}
		\centering
		\includegraphics[scale=\scal]{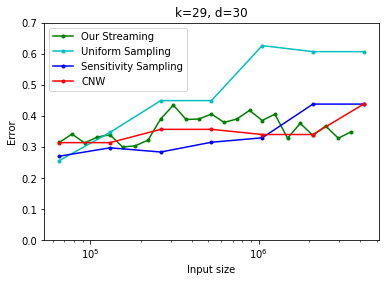}
		\caption{\label{wt10} $k=29, d=30$, error}
	\end{subfigure}
		\begin{subfigure}[h]{\scal\textwidth}
		\centering
		\includegraphics[scale=\scal]{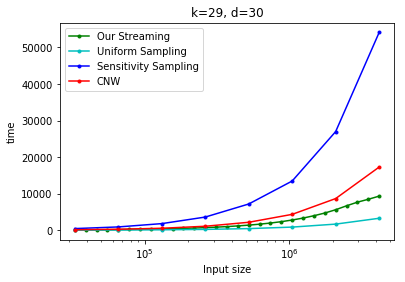}
		\caption{\label{wt11} $k=29, d=30$, time}
	\end{subfigure}
	\begin{subfigure}[h]{\scal\textwidth}
		\centering
		\includegraphics[scale=\scal]{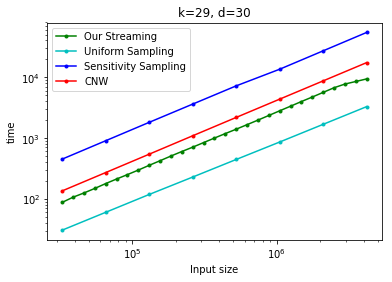}
		\caption{\label{wt12} $k=29, d=30$, time log}
	\end{subfigure}

\caption{\small\label{F4}Error and time comparison between two methods of streaming on full JL transform of Wikipedia, for different values of $k$ and $d$.}
\end{figure}

We then run experimental results that we summarize in this section.

\paragraph{Wikipedia Dataset}
We created a document-term matrix of Wikipedia (parsed enwiki-latest-pages -articles.xml.bz2-rss.xml from \cite{wic2019}), i.e. sparse matrix with 4624611 rows and 100k columns where each cell $(i,j)$ equals the value of how many appearances the word number $j$ has in article number $i$. We use a standard dictionary of the 100k most common words in Wikipedia \cite{dic2012}.

In order to compact the data into a small $d$, one applied on it a Johnson-Lindenstrauss (JL; see \cite{johnson1984extensions}) transform: We multiplied this each chunk from the BOW matrix by a randomized matrix of $100K$ rows and $d$ columns, and got a dense matrix of $n$ rows and $d$ columns.

\paragraph{Tree system}
We implemented a tree system that separates the $n$ points of the data into chunks of a desired size of coreset, called $m$. It uses consecutive chunks of the data, merge each pair of them, and uses a desired algorithm in order to reduce their dimensionality to a half. The process is described well in \cite{feldman2010coresets}. The result is a top coreset of the whole data, in a size of $d$.
We built such a system. Each streaming chunk is in the size of $d$, thus we had $log(n/d)=22$ floors.

\paragraph{Algorithms.}
We ran JL-Wiki two streaming methods: The one described at "Tree system", and the new method, implementation of Algorithm 2, labeled as "New Streaming". We implemented Algorithm 1 and our streaming algorithm; See Algorithm 2. 
The algorithms run on the data blocks of the streaming tree are three:
\begin{enumerate}
    \item A determinstic algorithm, appears in the proof of Theorem 5 of \cite{cohen2015optimal}, that guarantees a coreset with an error of $\epsilon$ in a size of $O(\frac{k}{\epsilon^2})$ (Labeled as "CNW"). 
    \item A non-uniform sampling algorithm, Algorithm 1 of \cite{maalouf2019tight}, that guarantees a tight bound of sensitivity for any $k$ and exact sensitivity  for $d=k-1$. Sampling points according to their sensitivities guarantees a coreset with an error of $\epsilon$ in a size of $O(\frac{d\log_2{d}}{\epsilon^2}).$ (Labeled as "Sensitivity Sampling").
    \item A uniform sampling method (Labeled as "Uniform Sampling").
\end{enumerate}
All algorithms were implemented in Python 3.6 via the Numpy library.
The "New Streaming" results were averaged over 10 experiments.
We ran four experiments with the following parameterization: 
\begin{itemize}
\item k=3, d=4
\item k=7, d=8
\item k=14, d=15
\item k=29, d=30
\end{itemize}

\paragraph{Hardware.}
A desktop, with an Intel i7-6850K CPU @ 3.60GHZ 64GB RAM. 
\paragraph{Results. }
We compared the error received for the different algorithms. We show the results in Figure \ref{F4} in $x$-logarithmic scale since the floors' sizes of the old streaming method differ multiplicatively.
In the old streaming, for every floor, we concatenated the leaves of the floor and measured the error between this subset to the original data. The error we determined was calculated by the formula $\frac{\norm{A-AV_C^TV_C}^2-\norm{A-AV_A^TV_A}^2}{\norm{A-AV_A^TV_A}^2}$, where $A$ is the recieved data matrix, $V_A$ received by SVD on A, and $V_C$ received by SVD on the top leaf recieved from steaming $A$; See \ref{wt1},\ref{wt4},\ref{wt7},\ref{wt10}. We also measured running times of each input size. The are showed in linear scale on \ref{wt2},\ref{wt5},\ref{wt8},\ref{wt11}, and in logarithmic scale on \ref{wt3},\ref{wt6},\ref{wt9},\ref{wt12}

\paragraph{Discussion. }
 One can notice in Figures \ref{F4} that the error is relatively similar for the new streaming as the streaming tree methods and as expected lower than the error of the Uniform Sampling.  In addition, running times of the new method are shorter in a scale or two than the streaming tree's blocks methods, and usually even shorter than those of Uniform Sampling. 

\bibliography{main.bbl}
\bibliographystyle{plain}

\end{document}